\keywords{Uniform Interpolation, \EUF, DAG representation, 
%model theory, 
term rewriting}
\theoremstyle{plain} %\crefname{satz}{Satz}{S\"atze}
\newtheorem{definition}{Definition}[section]
\newtheorem{example}{Example}[section]
\newtheorem{remark}{Remark}[section]
\newtheoremstyle{my_theorem}%
  {}% measure of space to leave above the theorem. E.g.: 3pt
  {}% measure of space to leave below the theorem. E.g.: 3pt
  {\it}% name of font to use in the body of the theorem
  {}% measure of space to indent
  {}% name of head font
  {}% punctuation between head and body
  { }% space after theorem head; " " = normal interword space
  {{\bf \thmname{#1}\thmnumber{ #2}\thmnote{ (#3)}.}}% Manually specify head
\theoremstyle{my_theorem}
\newtheorem{theorem}{Theorem}[section]
\newtheorem{lemma}{Lemma}[section]
\newtheorem{proposition}{Proposition}[section]
\newcommand{\EUF}{\ensuremath{\mathcal{EUF}}\xspace}
\newcommand{\dpll}[1]{{\sc DPLL}\xspace}
\newcommand{\COMMENT}[1]{}
\newcommand{\ua}{\ensuremath{\underline a}}
\newcommand{\ub}{\ensuremath{\underline b}}
\newcommand{\ue}{\ensuremath{\underline e}}
\newcommand{\ut}{\ensuremath{\underline t}}
\newcommand{\ux}{\ensuremath{\underline x}}
\newcommand{\uy}{\ensuremath{\underline y}}
\newcommand{\uz}{\ensuremath{\underline z}}
\newcommand{\uw}{\ensuremath{\underline w}}
\newcommand{\uu}{\ensuremath{\underline u}}
\newcommand{\cM}{\ensuremath \mathcal M}
\newcommand{\cN}{\ensuremath \mathcal N}
\newcommand{\cI}{\ensuremath \mathcal I}
\renewcommand{\int}{\ensuremath {\mathcal I}}
\newcommand{\tup}[1]{\langle #1\rangle}            % tuple
\begin{document}

\title[Uniform Interpolation Algorithms in \EUF using DAGs]{Uniform Interpolants in \EUF: Algorithms using DAG-representations}
%\titlecomment{{\lsuper*}OPTIONAL comment concerning the title, \eg, 
 % if a variant or an extended abstract of the paper has appeared elsewhere.}

\author[S.~Ghilardi]{Silvio Ghilardi}	%required
\address{Universit\`a degli Studi di Milano (Italy)}	%required
\email{silvio.ghilardi@unimi.it}  %optional
%\thanks{thanks 1, optional.}	%optional

\author[A.~Gianola]{Alessandro Gianola}	%optional
\address{Free University of Bozen-Bolzano (Italy)}	%optional
\email{gianola@inf.unibz.it}  %optional
%\thanks{thanks 2, optional.}	%optional

\author[D.~Kapur]{Deepak Kapur}	%optional
\address{University of New Mexico (USA)}	%optional
\email{kapur@cs.unm.edu}  %optional
%\thanks{thanks 3, optional.}	%optional

%% etc.

%% required for running head on odd and even pages, use suitable
%% abbreviations in case of long titles and many authors:

%%%%%%%%%%%%%%%%%%%%%%%%%%%%%%%%%%%%%%%%%%%%%%%%%%%%%%%%%%%%%%%%%%%%%%%%%%%

%% the abstract has to PRECEDE the command \maketitle:
%% be sure not to issue the \maketitle command twice!

\begin{abstract}
%\vspace{-5mm}
%DK
The concept of  uniform interpolant for a quantifier-free formula from a given formula with a list of symbols, while well-known
in the logic literature, has been unknown  to the formal methods and automated reasoning community for a long time. This concept
is precisely defined. Two algorithms for computing %aleFeb2021
quantifier-free uniform interpolants in the theory of equality over uninterpreted symbols (\EUF) endowed with a list of symbols to be eliminated are proposed. The first algorithm is
%DK
non-deterministic and generates a uniform interpolant expressed as a disjunction of
conjunctions of literals, whereas the second algorithm gives a
compact representation of a uniform interpolant as a conjunction of Horn clauses. Both algorithms exploit efficient dedicated DAG representations of terms. Correctness and completeness proofs are supplied, using arguments combining rewrite techniques with model theory. 

 \end{abstract}

\maketitle

%% start the paper here:
%% in general the use of bibtex is encouraged

%\vspace{-6mm}
\section{Introduction}\label{sec:intro}

The theory of equality over uninterpreted symbols, henceforth denoted by \EUF, is one of the simplest theories that have found numerous applications in computer science, compiler optimization, formal methods and logic. Starting with the works of Shostak \cite{Shostak} and Nelson and Oppen \cite{NO} in the early eighties, some of the first algorithms were proposed in the context of developing approaches for combining decision procedures for quantifier-free theories including freely constructed data structures and 
%silvio 16/2 better to use 'linear' because Presburger is usually for integers
%Presburger 
linear
arithmetic over the rationals. \EUF was exploited for hardware verification of pipelined processors by Dill \cite{dill} and more widely subsequently in formal methods and verification using  model checking frameworks. With the popularity of SMT solvers, where \EUF
serves as a glue for combining solvers 
%silvio  I inserted 'for' 
for 
different theories, %ale: I don't understand here: do you mean  'combining different theories', or 'combining solvers for different theories' or 'combining different theories solvers'?
numerous new graph-based algorithms have been proposed in the literature over the last two decades for checking unsatisfiability of a conjunction of (dis)equalities of terms built using function symbols and constants. 

In \cite{McM}, the use of interpolants for automatic invariant generation was proposed, leading to a plethora %alefinal: I put 'plethora' instead of mushroom'
of research activities to develop algorithms for generating interpolants for specific theories as well as their combination. This new application is different from the role of interpolants for analyzing proof theories of various logics starting with the pioneering work of \cite{Craig,HuangInterpolant,Pudlak97} (for a recent survey in the SMT area, see~\cite{bonacina1,bonacina2}). 
Approaches like~\cite{McM,HuangInterpolant,Pudlak97}, 
%These approaches 
however, assume access to a proof of 
%silvio 16/2 $\alpha, \beta$ 
$\alpha\to \beta$
for which an interpolant is being generated. Given that there can in general be many interpolants including 
%silvio 
infinitely
%infinity %ale: should it be 'infinitely many'?
many for some theories, little is known about what kind of interpolants are effective for different applications, even though some research has been reported on the strength and quality of interpolants \cite{DSilvaKPW10,Weissenbacher12,HoderKV12}. 
%aleSEPT2021: references added in the previous sentence
%AGapril: add citations????

In this paper, a different approach is taken, motivated by the insight connecting interpolating theories %AGapril: a reviewer complains that 'interpolating theory' is not defined
 with those admitting quantifier-elimination, as advocated in \cite{Zarba}. Particularly, in the preliminaries,  %AGapril: added the reference to the preliminaries to stress that this is presented in THIS paper. It wasn't so clear before
the concept of a uniform interpolant (UI) 
defined by a formula $\alpha$,
in the context of formal methods and verification, is proposed for \EUF, which is well-known not to admit quantifier elimination. %alefinal: I modified the follwoing sentence for making it clearer as suggested by a reviewer
We recall here uniform interpolants in general; we fix a logic or a theory $T$ and a suitable fragment (propositional, first-order quantifier-free, etc.) of its language $L$.
Given an $L$-formula $\alpha(\ux, \uy)$ (here $\ux,\uy$ are the  variables occurring in $\alpha$), a \emph{uniform interpolant} of $\alpha$ (w.r.t.~$\uy$) is an $L$-formula $\alpha'(\ux)$ where only the $\ux$ occur, and that satisfies the following two properties:
\begin{inparaenum}[\it (i)]
	\item $\alpha(\ux, \uy)\vdash_T \alpha'(\ux)$; 
	\item for any further $L$-formula $\beta(\ux, \uz)$ such that $\alpha(\ux, \uy) \vdash_T \beta(\ux, \uz)$, we have $\alpha'(\ux) \vdash_T \beta(\ux, \uz)$. 
 \end{inparaenum}
Whenever uniform interpolants exist, one can compute an interpolant for an entailment like $\alpha(\ux, \uy) \vdash_T \beta(\ux, \uz)$ in a way that is \emph{independent} of $\beta$. 
%aleSEPT2021: added definition of UI as requested by a reviewer
A (quantifier-free) uniform interpolant for a formula $\alpha$ is in particular, for any formula $\beta$, an \emph{ordinary} interpolant \cite{Craig,lyndon} for the pair $(\alpha, \beta)$ such that $\alpha \to \beta$
%\footnote{
(as well as a \emph{reverse} interpolant \cite{McM} for an unsatisfiable pair $(\alpha, \neg\beta)$).\footnote{The third author recently learned from the first author that this concept has been used extensively in logic for decades \cite{GZ,pitts} to his surprise since he had the erroneous 
%DK
impression that he came up with the concept in 2012, which he presented in a series of talks \cite{kapur,kapurJSSC}.} A uniform interpolant could be %AGapril: before it was 'is defined'
 defined for theories irrespective of whether they admit quantifier elimination. For theories admitting quantifier elimination, a uniform interpolant can be obtained using quantifier elimination: %which can be prohibitively expensive: %ale4oct
  indeed, this shows that a theory enjoying quantifier elimination admits uniform interpolants as well. A uniform interpolant, when it exists, is unique up to logical equivalence: this immediately follows from the definition, since any uniform interpolant implies all the other formulae that are implied by $\alpha$ (and, then implies any other uniform interpolant). %aleSEPT2021: added sentence before
The equivalent concept of cover is proposed in \cite{GM} (see also \cite{cade19,IJCAR20,JAR21}).

%aleFeb2021: rephrased
Two algorithms with different characteristics
 for generating uniform interpolants from a formula in  \EUF (with a list of symbols to be eliminated) are proposed in this paper. %moved below
%Both the algorithms are simple, intuitive and easy to understand in contrast to related algorithms in the literature. 
They share a common subpart based on concepts used in a ground congruence closure  proposed in \cite{KapurRTA}, which flattens the input and generates a canonical rewrite system on constants along with  unique rules of the  form $f(\cdots)$, where $f$ is an uninterpreted symbol and the arguments $(\cdots)$ are canonical forms of constants. %AGapril: added $(\cdots)$ to make it more explicit
 Further, eliminated symbols are represented as a DAG 
 %silvio 1-10-21
 (`Directed Acyclic Graph')
 to avoid any exponential blow-up. 
 %silvio 1-10-21 added a sentence, to answer a referee remark
 DAG representation of terms are commonly used in theorem provers and are essential to keep basic algorithms like unification polynomial (see e.g. the detailed analysis in textbooks like~\cite{BaNi98}). In this paper, 
 we also introduce a new DAG-representation for terms, called `conditional DAG'-representation, see below. Both DAG- and conditional DAG-representations require an exponential blow-up in order to get full string representations of terms  (precise instructions for the related unravellings will be recalled/supplied in the paper).  However  a careful choice of adequate data structures inside the desired applications can avoid the need of such an unravelling and this is the reason for relying on DAG-compressed representations while designing symbolic algorithms.

 %silvio 1-10-21 changed the --> our
Our first algorithm is non-deterministic where undecided equalities on constants are hypothesized to be true or false, generating a branch in each case, and recursively applying the algorithm. 
It could also be formulated as an algorithm similar in spirit to the use of equality interpolants in the Nelson and Oppen framework for combination, where different partitions on constants are tried, with each leading to a branch in the algorithm. New symbols are introduced along each branch 
%(or globally?) 
to avoid exponential blow-up. 
%silvio 1-10-21 changed the --> our
Our second algorithm generalizes the concept of a DAG to conditional DAG in which subterms are replaced by new symbols under a conjunction of equality 
%(and disequality?) 
atoms, resulting in its compact and efficient representation. A fully or partially expanded form of a UI can be derived based on their use in applications. Because of their compact representation, 
%silvio 21/2 it is possible to prove that 
UIs can be 
%silviofinal I emphasized the effect on size (this adjusts some referee's observations)
%generated in polynomial time 
kept of polynomial size
for a large class of formulas. 

%alefinal: I moved the following part here from the end of the introduction since it was not about the structure of the paper %ale2021: I added an intuitive explanation of why we call it 'tableaux-based' 
The former algorithm is tableaux-based, in the sense that it has a tree structure computed for the logical formulae involved and a finite collection of rules that specifies how to create  branches. This algorithm produces the output in disjunctive normal form. The second algorithm is based on manipulation of Horn clauses and gives the output in (compressed) conjunctive normal form. We believe that the two algorithms are %in a sense %ale4oct
complementary to each other, especially from the point of view of applications. Model checkers typically synthesize safety invariants using conjunctions of clauses and in this sense they might better take profit from the second algorithm; however, model-checkers
%working backward  (like \textsc{MCMT}) 
dually representing sets of backward reachable states as disjunctions of \emph{cubes} (i.e., conjunctions of literals) %alefinal: explained 'cubes'  
would better adopt the first algorithm. Non-deterministic manipulations of cubes are also required to match certain PSPACE lower bounds, as in the case of SAS systems mentioned in~\cite{MSCS20}. On the other hand, regarding the overall complexity, it seems to be easier to avoid exponential blow-ups in concrete examples by adopting the second algorithm.

The termination, correctness and completeness of both the algorithms are proved using results in model theory about model completions; this relies on a basic result (Lemma~\ref{lem:cover} below) taken from \cite{cade19}. 

Both our algorithms are simple,
intuitive and easy to understand in contrast to other %alefinal: instead of 'related' I put 'other'
algorithms in the literature. In fact, the
algorithm from~\cite{cade19} requires the full saturation of all the formulae deductively implied %alefinal: I added the words before to explain 'full saturation', please check
in a version of superposition calculus equipped with ad hoc settings %aleOCT2021: I added the following line
(in that context, no compact representation of the involved formulae is considered), 
whereas the main merit of our second algorithm is to show that a very light form of completion is sufficient, thus simplifying the whole procedure and getting seemingly better complexity results.\footnote{Although we feel that some improvement is possible, the termination argument in~\cite{cade19} gives a double exponential bound, whereas we have a simple exponential bound for both algorithms (with optimal chances to keep the output polynomial in many concrete cases in the second algorithm).} The algorithm from~\cite{GM} presents some issues/bugs that need to be fixed (see \cite{cade19,JAR21} for details) and the correctness proof has never been published  (the technical report mentioned in \cite{GM} is not available).
%requires some bug fixes (as pointed out in~\cite{cade19}) and the related completeness proof is still missing.%\todo{End of an added paragraph:is it ok?}

%\todo{Introduction to be written.}  Give an historical account about literature on uniform interpolants. For research %in the non classical logics community, cite the pioneering work of A. Pitts~\cite{pitts}, the comprehensive %book~\cite{GZ} and the very recent developments in~\cite{MK}. 

%Bibentries for some Kapur's papers~\cite{kapur},~\cite{kapurCC},~\cite{kapurJSSC} are already available in the %BibteX file. Further paper to be cited are~\cite{GM},~\cite {cade19}.

%For model-checking applications say eg what follows.
The paper is structured as follows: in the next paragraph we discuss about related work on the use UIs. In Section~\ref{sec:prelim} we state the main problem,
%; in Section~\ref{sec:CC} we 
settle on some notation, discuss DAG representations and congruence closure. In Sections~\ref{sec:tab} and~\ref{sec:cond}, we respectively give the two algorithms for computing uniform interpolants in \EUF (correctness and completeness of such algorithms are proved in Section~\ref{sec:completeness}). We conclude in Section~\ref{sec:concl}. This paper extends a conference paper (\cite{CILC20}) in two respects: first, it improves the presentation and includes the full proofs, adding also further explanations; second, it contains additional material including detailed examples and some complexity considerations.

\vspace{3mm}

\noindent\textbf{Related work on the use of UIs.}
The use of uniform interpolants in model-checking safety problems for infinite state systems was already mentioned in~\cite{GM} and further exploited in a recent research line on the verification of data-aware processes \cite{CGGMR19,BPM19,MSCS20,BPM20,BPM21}.  Model checkers need to explore the space of all reachable states of a system; a precise exploration (either forward starting from a description of the initial states or backward starting from a description of unsafe states) requires quantifier elimination. The latter is not always available or might have prohibitive complexity; in addition, it is usually preferable to make over-approximations of reachable states both to avoid divergence and to speed up convergence. One well-established technique for computing over-approximations consists in extracting interpolants from spurious traces, see e.g.~\cite{McM}. %interpolants are used for various symbol elimination tasks in first-order settings~\cite{KovacsV09,KovacsV17}. %\todo{Ok for these citations?}  %ale4oct
 One possible advantage of uniform interpolants over ordinary interpolants is that they do not introduce over-approximations and so   abstraction/refinements cycles are not needed in case they are employed (the precise reason for that goes through the connection between uniform interpolants, model completeness and existentially closed structures, see~\cite{MSCS20} for a full account). In this sense, computing uniform interpolants has the same advantages and disadvantages as computing quantifier eliminations, with two remarkable differences. The first difference is that uniform interpolants may be available also in theories not admitting quantifier elimination (\EUF being the typical example); the second difference is that computing uniform interpolants may be tractable when the language is suitably restricted e.g. to unary function symbols
(this was already mentioned in~\cite{GM}, see also Remark~\ref{rmk:arity1} below). Restriction to unary function symbols is sufficient in database driven verification to encode primary and foreign keys~\cite{MSCS20}. It is also worth noticing that, precisely by using uniform interpolants for this restricted language, in~\cite{MSCS20} \emph{new decidability results} have been achieved for interesting classes of infinite state systems. %AGapril: in order to avoid the criticisms arised by some reviewers in LPAR, according to which we are inconsistent when we say that usually overapproximations are preferable but here we use UI, I would add a conclusive sentence here saying something like 'that's why it is sometimes worth computing uniform interpolants instead of computing overapproximations in some concrete contexts and applications'> just a sentence to stress again that we are not inconsistent, underlying the USUALLy is preferable to overapproximate, but NOT ALWAYS, and this is exactly the case. It's implicitly written, but I would strengthen it here.
Notably, such results are also operationally mirrored in the \textsc{MCMT} \cite{mcmt} implementation since version 2.8.

%The paper is structured as follows: in Section~\ref{sec:prelim} we state the main problem,
%; in Section~\ref{sec:CC} we 
%fix some notation, discuss DAG representations and congruence closure. In Sections~\ref{sec:tab} and~\ref{sec:cond}, we respectively give the two algorithms for computing UIs in \EUF; correctness and completeness of such algorithms are proved in Section~\ref{sec:completeness}. We conclude in Section~\ref{sec:concl}.

%\vspace{-2mm}
\section{Preliminaries}
\label{sec:prelim}

We adopt the usual first-order syntactic notions, including signature, term,
atom, (ground) formula; %and so on; %ale4oct
 our signatures are always \emph{finite} or \emph{countable}
%, multi-sorted 
and include equality. %alefinal: instead  of 'For simplicity' I added wlog
Without loss of generality, %we only consider 
only functional signatures, i.e. signatures whose only predicate symbol is equality, are considered. %ale4oct
%We compactly represent %ale4oct
A tuple $\tup{x_1,\ldots,x_n}$ of variables is compactly represented as $\ux$. The notation $t(\ux), \phi(\ux)$ means that the term $t$, the formula $\phi$ has free variables included in the tuple $\ux$.
This tuple is assumed to be formed by \emph{distinct variables}, thus we underline that, when we write e.g. $\phi(\ux, \uy)$, we mean that the tuples $\ux, \uy$ are made of distinct variables %are %ale: there were two 'are'
that are also disjoint from each other.
A formula is said to be \emph{universal} (resp., \emph{existential}) if it has the form $\forall \ux (\phi(\ux))$ (resp., $\exists \ux (\phi(\ux))$), where $\phi$ is  quantifier-free. Formulae with no free variables are called \emph{sentences}. 

From the semantic side, %we use 
the standard notion of $\Sigma$-structure $\cM$ is used: this is  a pair formed of a set (the `support set', indicated as $\vert \cM\vert$) and of an interpretation function. The interpretation function maps $n$-ary function symbols to $n$-ary operations on $\vert \cM\vert$  (in particular, constants symbols are mapped to elements of $\vert \cM\vert$).
A free variables assignment $\cI$ on  $\cM$ extends the interpretation function by mapping also variables to elements of $\vert \cM\vert$; the 
notion  of truth of a formula in a $\Sigma$-structure under a free variables assignment $\cI$ is the standard one. 

%The interpretation of a 
%(function, predicate) 
%symbol $\sigma$ in $\cM$ is denoted $\sigma^{\cM}$.
It may %happen that we need %ale4oct
be necessary to expand a signature $\Sigma$ with a fresh name for every $a\in \vert \cM\vert$: such expanded signature is called $\Sigma^{|\cM|}$ and $\cM$ is by abuse seen as a $\Sigma^{|\cM|}$-structure itself by interpreting the name of $a\in \vert \cM\vert$ as $a$ (the name of $a$ is
directly indicated as $a$ for simplicity).

A \emph{$\Sigma$-theory} $T$ is a set of $\Sigma$-sentences; a \emph{model}  of $T$ is a $\Sigma$-structure $\cM$ where all sentences in $T$ are true.
	 We use the standard notation $T\models \phi$ to say that $\phi$ is true in all models of $T$ for every assignment to the variables occurring free in $\phi$. We say that $\phi$ is \emph{$T$-satisfiable} iff there is a model $\cM$ of $T$ and an assignment to the variables occurring free in $\phi$ making $\phi$ true in $\cM$.

	%aleFeb2021: added some preliminaries about embedding and extensions  that are useful for the proofs in the paper
	 A {\it $\Sigma$-embedding}~\cite{CK} (or, simply, an embedding) between two $\Sigma$-structu\-res $\cM$ and
$\cN$ is a map $\mu: \vert \cM \vert \longrightarrow \vert \cN\vert $ among the
support sets $\vert \cM \vert $ of $\cM$ and $\vert \cN \vert$  of $\cN$ satisfying the condition
%\begin{equation}\label{eq:emb}
$(\cM \models \varphi \quad \Rightarrow \quad \cN \models \varphi)$
%\end{equation}
for all $\Sigma^{\vert \cM\vert}$-literals $\varphi$ ($\cM$ is regarded as a
$\Sigma^{\vert \cM\vert}$-structure, by interpreting each additional constant $a\in
\vert \cM\vert $ into itself and $\cN$ is regarded as a $\Sigma^{\vert \cM\vert}$-structure by
interpreting each additional constant $a\in \vert \cM\vert $ into $\mu(a)$). 
If  $\mu: \cM \longrightarrow \cN$ is an embedding that is just the
identity inclusion $\vert \cM\vert\subseteq\vert \cN\vert$, we say that $\cM$ is a {\it
	substructure} of $\cN$ or that $\cN$ is an {\it extension} of
$\cM$.

	 %aleFeb2021: added the notion of Diagram 
	 Let $\cM$ be a $\Sigma$-structure. The \textit{diagram} of $\cM$, written $\Delta_{\Sigma}(\cM)$ (or just $\Delta(\cM)$), is the set of ground $\Sigma^{|\cM|}$-literals 
%(i.e. atomic formulae and negations of atomic formulae) 
that are true in $\cM$. 
%For the sake of simplicity, once again by abuse of notation, we will freely say that  $\Delta_{\Sigma}(\cM)$ is the set of $\Sigma^{|\cM|}$-literals which are true in $\cM$.
%
An easy but important result, called 
\emph{Robinson Diagram Lemma}~\cite{CK},
says that, given any $\Sigma$-structure $\cN$,  the embeddings $\mu: \cM \longrightarrow \cN$ are in bijective correspondence with
expansions of $\cN$ 
to   $\Sigma^{\vert \cM\vert}$-structures which are models of 
$\Delta_{\Sigma}(\cM)$. The expansions and the embeddings are related in the obvious way: 
%silvio 1-10-21
the name of $a$
%$\bar a$ 
is interpreted as $\mu(a)$. 
The typical use of the Robinson Diagram Lemma is the following: suppose we want to show that some structure $\cM$ can be embedded into a structure $\cN$ in such a way that some set of sentences $\Theta$ are true.  Then, by the Lemma, this turns out to be equivalent to the fact that the set of sentences $\Delta(\cM)\cup \Theta$ is consistent: thus, the Diagram Lemma can be used to transform an \emph{embeddability} problem into a \emph{consistency} problem (the latter is a problem of a logical nature, to be solved for instance by appealing to the compactness theorem for first-order logic).

%	 We now focus on the constraint satisfiability problem 
	 %and quantifier elimination 
%	 for a theory $T$.
	 %
%	 A $\Sigma$-formula $\phi$ is a $\Sigma$-\emph{constraint} (or just a constraint) iff it is a conjunction of literals.
	% 
%	The \emph{constraint satisfiability problem} for $T$ is the following: we are given a constraint 
%	$\phi(\ux)$
	%
%	and we are asked whether there exist  a model $\cM$ of $T$ and an assignment $\cI$ to the free variables $\ux$ such that $\cM, \cI \models \phi(\ux)$. 	
    %
    
%	 A theory $T$ has \emph{quantifier elimination} iff for every formula $\phi(\ux)$ in 
%   the signature of $T$ there is a quantifier-free formula 
%	$\phi'(\ux)$ such that $T\models \phi(\ux)\leftrightarrow \phi'(\ux)$.
%
%	Since we are in a
%	computational logic context, when we speak of quantifier elimination, 
%    we assume that it is effective, namely that it comes
%	with an algorithm for computing 
%	$\phi'$ out of $\phi$.
%
%	It is well-known that quantifier elimination holds in case we can eliminate quantifiers 
%    from \emph{primitive} formulae, i.e., formulae of the kind 
%     $\exists \uy \,\phi(\ux, \uy)$, with $\phi$ a constraint.
	
%\vspace{-3mm}	
\subsection{ Uniform Interpolants}	
Fix a  theory $T$ and an existential formula $\exists \ue\, \phi(\ue, \uz)$;  %call a \emph{residue} of $\exists \ue\, \phi(\ue, \uz)$ any quantifier-free formula belonging to the set of quantifier-free formulae  %ale4oct
 call a \emph{residue} of $\exists \ue \,\phi(\ue,\uz)$ any quantifier-
free formula $\theta(\uz,\uy)$ such that $T \models \exists
\ue\,\phi(\ue, \uz) \to \theta(\uz, \uy)$ (equivalently, such that
$T \models \phi(\ue, \uz) \to \theta(\uz, \uy)$). The set of residues
of $\exists \ue\, \phi(\ue,\uz)$ is denoted as $Res(\exists \ue\,
\phi(\ue, \uz))$.
%$$
%Res(\exists \ue\, \phi(\ue, \uz))~=~\{\theta(\uz, \uy)\mid T \models \exists \ue\,\phi(\ue, \uz) \to \theta(\uz, \uy)\} ~=~\{\theta(\uz, \uy)\mid T \models \phi(\ue, \uz) \to \theta(\uz, \uy)\}.
%$$ 
%
A quantifier-free formula $\psi(\uz)$  %AGapril: before, it was $\psi(\uy)$ , but here the y should be z
is said to be a \emph{$T$-(quantifier-free) uniform interpolant}\footnote{In some literature~\cite{GM,cade19} uniform interpolants are called \emph{covers}.} (or, simply, a \emph{uniform interpolant}, abbreviated UI) of $\exists \ue\, \phi(\ue,\uz)$ iff  $\psi(\uz)\in Res(\exists \ue\, \phi(\ue, \uz))$ and $\psi(\uz)$ implies (modulo $T$) all the formulae in $Res(\exists \ue\, \phi(\ue, \uz))$. It is immediately seen that  UIs are unique (modulo $T$-equivalence).
%
%We say that %ale4oct
A theory $T$ has \emph{uniform quantifier-free interpolation} iff every existential formula $\exists \ue\, \phi(\ue,\uz)$ 
%(equivalently, every primitive formula $\exists \ue\, \phi(\ue,\uz)$) 
has a UI.

\begin{example}  %alefinal: I added an example of UI
Consider the existential formula $\exists e \,(f(e,z_1)=z_2 \land f(e,z_3)=z_4)$: it can be shown that its $\EUF$-uniform interpolant is 
%silviofinal $z_1=z_2\to z_3=z_4$.
$z_1=z_3\to z_2=z_4$.
\end{example}

Notably, if $T$ has uniform quantifier-free interpolation, then it has ordinary quantifier-free interpolation, 
%silviofinal removed the reference (it is not strictly needed)
%~\cite{BGR14}, 
in the sense that if we have $T\models \phi(\ue, \uz)\to \phi'(\uz, \uy)$ (for quantifier-free formulae $\phi, \phi'$), then there is a quantifier-free formula $\theta(\uz)$ such that  $T\models \phi(\ue, \uz)\to \theta(\uz)$ and $T\models \theta(\uz)\to \phi'(\uz, \uy)$. In fact, if $T$ has uniform quantifier-free interpolation, then the interpolant $\theta$ is independent on $\phi'$ (the  same $\theta(\uz)$ can be used as interpolant for all entailments $T\models \phi(\ue, \uz)\to \phi'(\uz, \uy)$, varying $\phi'$).  
Uniform quantifier-free interpolation has a direct connection to an important notion from classical model theory, namely model completeness (see~\cite{cade19} for more information).

%DK
%\vspace{-3mm}
\subsection{Problem Statement}

In this paper %we deal with %ale4oct
 the problem of \emph{computing UIs for the case in which $T$ is pure identity theory in a functional signature $\Sigma$} is considered; this theory is called $\EUF(\Sigma)$ or just $\EUF$ in the SMT-LIB2 terminology. %We shall provide %ale4oct
 Two different algorithms are proposed for that (while proving correctness and completeness of such algorithms, %we simultaneously also %ale4oct
 it is simultaneously shown that UIs exist in \EUF). The first 
%DK
algorithm computes a UI in disjunctive normal form format, whereas the second algorithm supplies a UI in conjunctive normal form format. Both algorithms use suitable DAG-compressed representation of formulae.

%DK
%We use %ale4oct
The following notation is used throughout the paper. Since it is easily seen that existential quantifiers %alefinal: before it was written that 'UIs commute with disjunctions, instead of 'existential quantifiers commute etc.'. please check.
 commute with disjunctions, it is sufficient to compute UIs for \emph{primitive} formulae, i.e. for formulae of the kind $\exists \ue\, \phi(\ue,\uz)$, where $\phi$ is a \emph{constraint}, i.e. a conjunction of literals. 
%We want to compute the UI of $\exists \ue\, \phi(\ue, \uy)$  where $\phi$ is a \emph{constraint}, i.e. a conjunction of literals.
We partition all the $0$-ary %alefinal: 'constant' was not clear since they are called variables below
symbols from the input as well as symbols newly introduced into
disjoint sets.
We use the following conventions: 
%\todo{[Warning: these conventions are slightly  from those we used in our mail exchange and in previous drafts.]}
\begin{compactenum}
 \item[-] $\ue=e_0, \dots, e_N$ (with $N$ integer) are  symbols to be eliminated, called \emph{variables},
 \item[-] $\uz=z_0, \dots, z_M$ (with $M$ integer) are  symbols \emph{not} to be eliminated, called \emph{ parameters},
 \item[-] symbols $a, b, \dots$ stand for both variables and parameters, and for (fresh) constants as well (usually introduced during skolemization).  %ale2021 %alefinal: I changed here 'letters' with 'symbols' and explained what is N and M  
\end{compactenum}
In the following we will also use symbols $\uy$ for indicating variables that changed their status and do not need to be eliminated anymore: we use symbols $a, b, \dots$ for them as well. %alefinal: added the previous sentence
Variables $\ue$ are usually skolemized during the manipulations of our algorithms and proofs below, in the sense that %alefinal: before it was 'so' instead of 'in the sense that': using 'in the sense that' makes clearer the sense of skolemization we use
they have to be considered as fresh individual constants. 
%(for instance, we shall often introduce rewriting systems involving the $\ue$ and these systems will be always be \emph{ground} because of this implicit skolemization).

\begin{remark} UI computations eliminate symbols which are existentially quantified variables (or skolemized constants).  \emph{Elimination of function symbols} can be reduced to elimination of variables in the following way. Consider a formula $\exists f\, \phi(f,\uz)$, where $\phi$ is quantifier-free. Successively abstracting out functional terms, we get that  $\exists f\, \phi(f,\uz)$ is equivalent to a formula of the kind
$\exists \ue \,\exists f (\bigwedge_i (f(\ut_i)=e_i) \wedge \psi)$, where the $\ue$ are fresh variables  (with $e_i\in \ue$), $\underline{t}_i$ are terms, %alefinal: I added that $t_i$ are terms. Please double check
$f$ does not occur in  %ale: shouldn't we say here something about the relationship between \ue and v_i? If I remember correctly this point, we use the fresh variables \ue to abstract out the f(t_i) which are equal to v_i
%silvio you are right, I changed v_i with e_i
 $\ut_i, e_i, \psi$  and $\psi$ is quantifier-free. The latter is semantically equivalent to  %aleFeb2021: Deepak here is asking if this is true of it is just one way implication. Silvio do you want to add some details?
 $\exists \ue (\bigwedge_{i\neq j} (\ut_i = \ut_j \to e_i= e_j) \wedge \psi)$, where  $\ut_i = \ut_j$ 
 %silvio 'stands for' replaced by 'is' to save a line
 is
 the conjunction of the component-wise equalities of the tuples $\ut_i$ and $\ut_j$.
\end{remark}

\subsection{Flat Literals, DAGs and Congruence Closure}\label{sec:CC}

A \emph{flat literal} is a literal of one of the following kinds
\begin{equation}\label{eq:lit}
f(a_1,\dots, a_n)= b, \quad a_1=a_2, \quad a_1\neq a_2
\end{equation}
where $a_1, \dots, a_n$ and $b$ %AGapril: 'b' was missing
 are (not necessarily distinct) variables or constants. A formula is flat iff all literals occurring in it are flat; flat terms are 
terms that may occur in a flat literal (i.e. terms like those appearing in~\eqref{eq:lit}). 

We call a \emph{DAG-definition} (or simply a DAG) any formula $\delta(\uy, \uz)$ of the following form 
 (where $\uy:=y_1 \dots, y_n$):
$$
\bigwedge_{i=1}^n (y_i=f_i(y_1, \dots, y_{i-1}, \uz))~.    
$$
Thus, $\delta(\uy, \uz)$ provides 
%silviofinal removed  'in fact' that appears twice in 2 lines 
%
an  \emph{explicit definition} of the $\uy$ in terms of the parameters $\uz$.  %ale: before, we call the parameters \uz: maybe we should be consistent here as well
%silvio right, I replaced a with z

Given a DAG $\delta$, we can in fact associate to it %AGapril: changed structure of the sentence
the substitution $\sigma_{\delta}$ recursively defined by the mapping 
$$(y_i)\sigma_{\delta}~:=~ f_i((y_1)\sigma_{\delta}, \dots, (y_{i-1})\sigma_{\delta}, \uz).  %AGapril: before, instead of $\uz$ there was $\ua$: it's better to be consistent
$$  
%
%%DK: I find it somewhat annoying to present substitution application in postfix instead of prefix.
%silviofinal better to drop the sentence below at all; it confused the referee and is not important 
%We may sometimes confuse, %alefinal: I added 'by abuse of notation' to clarify this point
%by abuse of notation, a DAG $\delta$ as above with its associated substitution $\sigma_{\delta}$. 
%
DAGs are commonly used to represent formulae and substitutions in compressed form: in fact a formula like
\begin{equation}\label{eq:constraint}
\exists \uy ~(\delta(\uy, \uz) \wedge \Phi(\uy, \uz))   %AGapril: before, instead of $\uz$ there was $\ua$: it's better to be consistent
\end{equation}
is equivalent to $\Phi((\uy)\sigma_{\delta}, \uz)$, and is called  \emph{ DAG-representation }.  %ale4oct: I added here the nomenclature of DAG repr
%silviofinal rewritten in order to be pointed back within section 5
The formula $\Phi((\uy)\sigma_{\delta}, \uz)$ is said to be the \emph{unravelling} of~\eqref{eq:constraint}: notice that computing  such an unravelling 
in uncompressed form by explicitly performing substitutions
%however, the full unravelling (that is, the uncompressed form obtained by explicitly performing substitutions) %alefinal: I modified the previous sentences and added the thing between parentheses 
%of such an equivalence 
causes an exponential blow-up. This is why we shall syste\-ma\-tically prefer  DAG-representations~\eqref{eq:constraint} %like %ale4oct
 to their uncompressed forms.
%\todo{Deepak: insert here CC}

As above stated, our main aim is to compute the UI of a primitive formula $\exists \ue\, \phi(\ue,\uz)$;  
using trivial logical manipulations (that have just linear complexity costs), it can be shown that, without loss of generality %alefinal: modified this sentence
\emph{the constraint $\phi(\ue,\uz)$ can be assumed to be flat}. To do so, it is sufficient to perform a \emph{preprocessing procedure} %alefinal: i wrote here there this is the core preprocessing procedure of Section 2.3
by applying well-known Congruence Closure Transformations: the reader is referred to~\cite{KapurRTA} for a full account. 

%!TEX root = main.tex      

\section{The Tableaux  Algorithm}\label{sec:tab}

The algorithm proposed in this section is tableaux-like. It manipulates formulae in the following \emph{DAG-primitive} format
\begin{equation}\label{eq:primitive}
\exists \uy ~(\delta(\uy, \uz) \wedge \Phi(\uy, \uz)\wedge \exists \ue ~\Psi(\ue, \uy, \uz))~~
\end{equation}
where $\delta(\uy, \uz)$ is a DAG and $\Phi, \Psi$ are flat constraints (notice that the $\ue$ do not occur in $\Phi$). We call a formula of that format a \emph{DAG-primitive formula}. %alefinal: added this nomenclature
To make reading easier, we shall omit in~\eqref{eq:primitive} the existential quantifiers, so as~\eqref{eq:primitive} will be written simply as
\begin{equation}\label{eq:primitive1}
\delta(\uy, \uz) \wedge \Phi(\uy, \uz)\wedge \Psi(\ue, \uy, \uz)~~.
\end{equation}

We remark that $\Psi$ can contain literals whose terms depend explicitly on $\ue$, whereas $\Phi$ does \emph{not} contain \emph{any} occurrence of the $\ue$ variables.
Initially the DAG $\delta$ and the constraint $\Phi$ are the empty conjunction.
In the DAG-primitive formula~\eqref{eq:primitive1}, 
variables $\uz$ are called \emph{parameter} variables, variables $\uy$ are called \emph{ (explicitly) defined} variables
and variables $\ue$ are called \emph{(truly) quantified} variables. Variables $\uz$ are never modified; in contrast, during the execution of the algorithm it could happen that some quantified variables
may disappear or become defined variables (in the latter case they are renamed: a quantified variables $e_i$ becoming defined is renamed as 
$y_j$, for a fresh $y_j$). Below,  letters $a,b, \dots$ range over $\ue\cup \uy\cup \uz$.

\begin{definition}
%We say that a 
A term $t$ (resp. a literal $L$) is $\ue$-free when there is no occurrence of any of the variables $\ue$ in $t$ (resp. in $L$).  Two flat terms $t,u$ of the kinds
\begin{equation}\label{eq:comp}
t:= f(a_1, \dots, a_n)\qquad u:=f (b_1, \dots, b_n) 
\end{equation}
are said to be \emph{compatible} iff for every $i=1, \dots, n$, 
%we have that 
either $a_i$ is identical to $b_i$ or
%neither $a_i$ nor $b_i$ belongs to $\ue$.
both $a_i$ and $b_i$ are $\ue$-free.
The \emph{difference set} of two compatible terms as above is the set of disequalities $a_i \neq b_i$, where $a_i$ is not equal to  $b_i$. %ale4oct %such that $a_i$ is not identical to $b_i$.
\end{definition}

%Notice that we should have written $\exists \ue ~\Psi(\ue, \uy, \uz))$ in Formula~\eqref{eq:primitive}, but since every occurrence of each $\uz$ variable $z$ in $\Psi$ can be substituted with a fresh $\uy$ variable $y$ (with the proviso of adding a new definition $y=z$ in $Def(\uy, \uz)$), we can suppose that no $\uz$ variable occurs in $\Psi$, thus $\exists \ue \phi(\ue, \uz)$ is equivalent to Formula~\eqref{eq:primitive}.

%The claim of this section is to design a \emph{`tableaux-like'} algorithm that computes the residue of a  dag-primitive formula like~\eqref{eq:primitive}: the residue that the algorithm returns as output will be given in the form of a disjunction of dag-constraints like~\eqref{eq:constraint}. 

\subsection{The Tableaux Algorithm}\label{subsec:tabalg}

Our algorithm 
applies the transformations below 
%silvio 1-10-21
%(except the last one) 
in a ``don't care'' non-deterministic way.
%silvio 1-10-21 line added according to referee's request 
By saying this, we mean that the output is independent (up to logical equivalence) of the order of the application of the transformations, once the following priority  is respected:
the last transformation has lower priority with respect to the remaining transformations. The last transformation is also responsible for  splitting the execution of the algorithm in several branches: 
each branch will produce a different disjunct in the output formula.
Each state of the algorithm is a DAG-primitive formula like~\eqref{eq:primitive1}.
%
%
%The idea of the algorithm is to alternate between rewriting steps in the `Knuth-Bendix completion' style over the formula $\Psi$ (using, e.g., an lpo where $\ue$ and $\uy$ are seen as constants, the function symbols have a stronger precedence with respect to all to the other constans, and the $\ue$ variables are always bigger than every $\uy$ variable) and `tableaux-like' rules, as formally presented in the following.
%
%From now on, variables $\ue$ are totally ordered in some way (for simplicity, $e_i$ is considered bigger than $e_j$ just in case $i>j$). 
%
We now provide the rules that constitute our `tableaux-like'  algorithm.
%: below we use the letters $a,b,\dots$ for generic variables and $t,u, \dots$ for generic flat terms.
\begin{description} \item[{\rm (1)}] \framebox{\emph{ Simplification Rules}:}   
		\begin{description}
			\item[{\rm (1.0)}]  if an atom like $t=t$ belongs to  $\Psi$, just remove it; if a literal like $t\neq t$ occurs somewhere,     delete $\Psi$, replace $\Phi$ with $\bot$ and stop;
			\item[{\rm (1.i)}] If $t$ is not a variable and $\Psi$ contains both $t=a$ and $t=b$, remove the former and replace it with $a=b$. %ale4oct %ale2021
			\item[{\rm (1.ii)}] If $\Psi$ contains $e_i=e_j$ with $i>j$, remove it and replace everywhere $e_i$ by $e_j$.
		\end{description}
			\item[{\rm (2)}] \framebox{\emph{DAG Update Rule}:}
			if $\Psi$ contains $e_i=t(\uy, \uz)$, remove it, rename everywhere $e_i$ as $y_j$ (for fresh $y_j$) and add 
			$y_j=t(\uy,\uz)$ to $\delta(\uy, \uz)$. More formally:
			$$  \delta(\uy, \uz) \wedge \Phi(\uy, \uz)\wedge \left( \Psi(\ue, e_i, \uy, \uz)\wedge e_i=t(\uy, \uz)\right) $$
			 $$ \Downarrow $$
			$$  \left(\delta(\uy, \uz) \wedge y_j=t(\uy, \uz)\right) \wedge \Phi(\uy, \uz)\wedge \Psi(\ue, y_j, \uy, \uz)  $$
			\item[{\rm (3)}] \framebox{\emph{$\ue$-Free Literal Rule}:} if $\Psi$ contains a literal $L(\uy, \uz)$, move it to $\Phi(\uy, \uz)$. More formally:
		        $$  \delta(\uy, \uz) \wedge \Phi(\uy, \uz)\wedge \left(\Psi(\ue, \uy, \uz)\wedge L(\uy, \uz)\right) $$
			 $$ \Downarrow $$
			$$  \delta(\uy, \uz) \wedge \left(\Phi(\uy, \uz)\wedge L(\uy, \uz)\right) \wedge \Psi(\ue, \uy, \uz)  $$
			\item[{\rm (4)}] \framebox{\emph{Splitting Rule}:} If $\Psi$ contains a pair of atoms $t=a$ and $u=b$, where $t$ and $u$ are compatible flat terms like in~\eqref{eq:comp},  and no disequality from the difference set of $t,u$ belongs to $\Phi$,
			 then non-deterministically apply one of the following alternatives:
			\begin{description}
			\item[{\rm (4.0)}] remove from $\Psi$
			the atom 
			$f(b_1,\dots, b_n)=b$,  add to $\Psi$ the atom $a=b$ %ale4oct  %ale2021
			and add to $\Phi$ all equalities $a_i=b_i$ such that $a_i\neq b_i$ is in the difference set of $t,u$;
			\item[{\rm (4.1)}] add to $\Phi$ one of the disequalities from the difference set of $t,u$ (notice that the difference set cannot be empty, otherwise Rule (1.i) applies).
			\end{description}
\end{description}

\noindent When no more rule is applicable, delete $\Psi(\ue, \uy, \uz)$ from the resulting formula $$  \delta(\uy, \uz) \wedge \Phi(\uy, \uz) \wedge \Psi(\ue, \uy,\uz)  $$
so as to obtain for any branch an output formula in DAG-representation of the kind $$  \exists  \uy  ~(\delta(\uy, \uz) \wedge \Phi(\uy, \uz) ) ~~ .$$

We will see in Remark~\ref{rmk:arity1} that in the case of only unary functions, Rule (4) can be disregarded, and the algorithm becomes much simpler and computationally tractable. %ale22oct2021: added sentence as deepak requested

\begin{example} %aleFeb2021: added a small example for application of rule 4 as requested by Deepak
We give a simple example for the application of Splitting Rule. Let the pair of atoms $f(z_1,e)=z_2$ and $f(z_3,e)=z_4$ be in $\Psi$ such that 
%silvio 1-10-21 misprint ? $z_1\neq z_2$ 
$z_1\neq z_3$
is not in $\Phi$. Since the difference set of
%silvio 1-10-21 $f(z_1,e)=z_2$ and $f(z_3,e)=z_4$ 
$f(z_1,e)$ and $f(z_3,e)$ 
is $\{ z_1\neq z_3 \}$ and  
%silvio 1-10-21 misprint ? $z_1\neq z_2$ 
$z_1\neq z_3
\not\in \Phi$, 
Splitting Rule applies. Applying the first alternative (Rule (4.0)), the atom $f(z_3,e)=z_4$ is removed from $\Psi$, and the atoms $z_1=z_3$ and $z_2=z_4$ are added to $\Psi$. Applying the second alternative (Rule (4.1)), the disequality $z_1\neq z_3$ is added to $\Psi$.
\end{example}

%silvio 1-10-21 added to answer the referee observation  
Notice that in the above example the literals added to $\Psi$ as a consequence of the Splitting Rule can  then immediately be moved to $\Phi$ via Rule (3), as they are $\ue$-free; this is always the case for 
%aleOCT2021: added the following words and modified slightly
the disequalities from difference sets in Rule (4.0) and Rule (4.1)  (because they only involve $\ue$-free terms, by definition of difference set), but not necessarily for the atom $a=b$ mentioned in Rule (4.0), because this atom  might not be $\ue$-free.

\begin{remark}\label{rmk:4-rule} %ale2021: for Deepak, I added the following remark about heuristics of rule 4 and other clarifications you asked for
Splitting Rule (4)  creates branches in an optimized way: indeed, the alternatives $a_i=b_i$ in (4.0) and  $a_i\neq b_i$ in (4.1) are not generated
\emph{for every pair of variables} $a,b$, but the rule is applied only when the left members are \emph{compatible flat terms}. Notice also that in case for every $a_i$ in 
$t$ and for every $b_i$ in $u$ the pairs $a_i,b_i$ are all identical, Rule (1.i) applies instead. %ale2021: for silvio, can we restrict rule 4 by imposing that $t$ and $u$ are not identical? Or simply impose a precedence to the simplification rules so that they are applied with priority?
\end{remark}

The following remark will be useful to prove the correctness of our algorithm, since it gives a description of the kind of literals contained in a state triple that is \emph{terminal} (i.e., when no rule applies).

\begin{remark}\label{rmk:terminal}
Notice that if no transformation applies to~\eqref{eq:primitive}, the set $\Psi$
can only contain disequalities of the kind $e_i\neq a$,  together with equalities of the kind $f(a_1, \dots, a_n)= a$.
However, when it contains $f(a_1, \dots, a_n)= a$, one of the $a_i$ must belong to $\ue$ (otherwise (2) or (3) applies). Moreover, if
$f(a_1, \dots, a_n)= a$ and $f(b_1, \dots, b_n)= b$ are both in $\Psi$, then either they are not compatible or $a_i\neq b_i$ belongs to $\Phi$ for some $i$  and for some variables $a_i, b_i$ not in $\ue$ (otherwise (4) or (1.i) applies).
\end{remark}

%Notice that rule  {\rm(4)} performs a `tableaux-like' non-deterministic splitting.

The following proposition 
%(proved in Appendix~\ref{app}) 
states that, by applying the previous rules, termination is always guaranteed.

\begin{proposition}\label{prop:term-first}
The non-deterministic procedure presented above always terminates.
\end{proposition}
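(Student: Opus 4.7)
The plan is to exhibit a well-founded lexicographic measure on states that strictly decreases under every rule application. A preliminary observation bounds the universe of constants that can ever appear: $\uz$ is fixed throughout; rule (2) turns an $e_i$ into a fresh $y_j$, so $|\ue|+|\uy|$ is invariant; rule (1.ii) eliminates an $e_i$ by substitution, strictly decreasing $|\ue|$; and no other rule modifies these sets. Since $\uy$ starts empty and $|\ue|+|\uy|$ never grows, at most as many distinct $y_j$'s are ever introduced along a branch as the initial cardinality of $\ue$. Hence the pool of constants ever available is finite, and so is the set $S$ of flat literals buildable over them; let $N=|S|$.

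I would then propose the measure
\[
M(\delta,\Phi,\Psi)\;=\;\langle\,|\ue|,\;C(\Psi),\;|\Psi|,\;N-|\Phi|\,\rangle,
\]
where $C(\Psi)$ counts the ``complex'' atoms $f(a_1,\dots,a_n)=b$ in $\Psi$, and the tuple is ordered lexicographically on $\mathbb{N}^4$. A case analysis then shows that every rule strictly decreases $M$: rules (1.ii) and (2) each drop the first component; rule (1.i) replaces a complex atom by a non-complex $a=b$, and rule (4.0) removes a complex atom, both dropping the second; rule (1.0) (in its non-halting form) deletes a literal from $\Psi$, while rule (3) moves an $\ue$-free literal out of $\Psi$, both dropping the third; finally, rule (4.1) adds to $\Phi$ a disequality that, by the very precondition of rule (4), was not already present in $\Phi$, so $|\Phi|$ strictly grows and the fourth component strictly decreases. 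Well-foundedness of the lexicographic order on $\mathbb{N}^4$ then yields termination of every branch, and hence of the entire non-deterministic procedure.

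The main obstacle I anticipate is rule (4.1): it is the only rule that touches $\Phi$ alone without affecting $|\ue|$, $C(\Psi)$, or $|\Psi|$, so its progress can only be certified through the fourth component, which requires the a priori finiteness bound $N$ from the first step together with the rule's precondition ensuring that the inserted disequality is genuinely new. A secondary subtlety is that the substitution performed by rule (1.ii) can collapse previously distinct literals in $\Psi$, possibly changing the later components in unexpected ways, but this is harmless since $|\ue|$ has already strictly dropped, so lex order is already satisfied.
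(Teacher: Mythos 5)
Your proposal is correct and follows essentially the same strategy as the paper's own proof: a lexicographic termination measure whose leading component is the number of (quantified) variables, which is non-increasing and strictly drops under (1.ii) and (2); a component tracking the ``length'' of $\Psi$ (you split this into the count of complex atoms $C(\Psi)$ plus $|\Psi|$, which sidesteps any ambiguity about what ``length'' means for rules (1.i) and (4.0)); and a final component bounded by the a priori finite universe of flat literals to handle rule (4.1), matching the paper's use of the number of \emph{missing} disequalities. The only cosmetic difference is that the paper counts missing disequalities in $\Psi\cup\Phi$ while you count missing literals in $\Phi$; both are bounded by the same finiteness argument and both strictly decrease only under (4.1), so the two proofs are interchangeable.
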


\begin{proof} It is sufficient to show that every branch of the algorithm must terminate. In order to prove that, first observe that the total number of the variables involved never increases and it decreases if  
%silvio 16/2 (1.i) 
 (1.ii) is applied (it might decrease also by the effect of (1.0)). Whenever such a number does not decrease, there is a bound on the number of disequalities that can occur in $\Psi, \Phi$. Now transformation (4.1) decreases the number of disequalities that are %silvio 16/2 potentially
 actually
 \emph{missing}; the other transformations do not increase this number. Finally, all  transformations except (4.1) reduce the  length of  $\Psi$.  
%reduce the number of equalities of $\Phi$: however, to take care of the fact the rule (5.0) can increase the number of equalities of $\Phi$, we must \emph{weight} the length of $\Phi$ in the following way: variables have weight $1$ and the weight of a function symbol must be more than twice its arity. In this way, it is not difficult to see that the weighted length of $\Phi$ decreases at each step.
\end{proof}

%\begin{remark}
%	Step (4) from the previous algorithm can be modified as follows: 
%	\begin{description}
%	\item[{\rm (4$'$)}] If $\Psi$ contains a pair of atoms like 
%			$$
%			f(a_1,\dots,  \dots, a_n)=a, \quad f(b_1,\dots, b_n)=b
%			$$
%			and $\Phi\cup \Psi$ does not already contain any of the literals $a_1\neq b_1, \dots, a_n\neq b_n$, then non-deterministically apply one of the following alternatives:
%			\begin{description}
%			\item[{\rm (4.0$'$)}] add to $\Phi$ the atoms $a_1= b_1, \dots, a_n= b_n, a=b$ and remove the atom $f(b_1,\dots, b_n)=b$;
%			\item[{\rm (4.i$'$)}] add to $\Phi$ one of the literals $a_i\neq b_i$, for some $i$.
%		\end{description}
%	\end{description}
%	Rule (4$'$) is applicable also in the case no $\ue$-variable occurs in $f(a_1,\dots,  \dots, a_n)=a$ and  $f(b_1,\dots, b_n)=b$. All the results state below are analogous if rule (4) is substituted with rule (4$'$). Notice that rule {\rm(1.i)} can be subsumed by rule  {\rm(4$'$)}, which performs, exactly like rule (4), a `tableaux-like' splitting. 
%\end{remark}

\begin{remark}\label{rmk:complexity}%ale2021: for Deepak, added remark about complexity and quadratic complexity without rule 4
 The overall complexity of the above algorithm is exponential in time, because of the number of branches created by Splitting Rule (4). However, the number of rules applied in a single branch is quadratic in time in the dimension of the input: this fact can be proved by relying on the termination argument shown in Proposition \ref{prop:term-first}. Indeed, every rule of the algorithm above except for Rule (4.1) reduces the  length of  $\Psi$, which has length $O(n)$ (where $n$ is the dimension of the input). Let $c_1$ be a counter that decreases every time a rule (except for Rule (4.1)) is applied: hence, $c_1:=O(n)$. Moreover, whenever Rule (4.1) is applied, the length of $\Psi$ remains the same, but the number of disequalities that are actually missing decreases: this number of missing disequalities is clearly $O(n^2)$. Let $c_2$ be a counter that decreases whenever Rule (4.1) is applied: hence, $c_2:=O(n^2)$. Now, consider a counter $c:=c_1+c_2$: this counter decreases every time a rule of the algorithm above is applied. Since $c=c_1+c_2=O(n)+O(n^2)=O(n+n^2)=O(n^2)$, we conclude that the number of rules applied in a single branch is quadratic, as wanted.
 \end{remark}
 
 \begin{remark}\label{rmk:arity1}%ale2021
 Notice that if function symbols are all unary, 
 there is no need to apply Rule 4.
 %aleSEPT2021: added some arguments for the previous fact. Please double check the following sentences
 Indeed, if all the function symbols are unary, compatible flat terms $u$ and $t$  can either have  the form $u:=f(e), t:= f(e)$ (for some $e$ in $\ue$) or be $\ue$-free. In the former case Rule 1 applies before, whereas in the latter case Rule 3 applies before (we recall that Rule 4 has the lowest priority). Hence, Rule 4 is never applied. Thanks to the argument shown in Remark~\ref{rmk:complexity}, no branch is created and the number of rules applied is quadratic in time in the dimension of the input.
  Hence, for this restricted case computing UI is a tractable problem (this is consistent with \cite{cade19,JAR21}). The case of unary functions has relevant applications in database driven verification~\cite{MSCS20,CGGMR19,BPM19,BPM20,BPM21}
 (where unary function symbols are used to encode primary and foreign keys). %AGapril: as suggeste by the first reviewer, I would add an explanation why Rule 4 is not needed. Actually, maybe there is something that should be explained here in more detail, i.e. that in case of  unary terms f(y1) and f(y2), since no $e$ variable is involved, it is not necessary to apply rule 4. i don't remember if we have written explicitly that the rules don't need to be applied when the literals are $e$-free (maybe we really should write it in a remark somewhere, and then reference that remark here for clarifying this point about unary functions)
\end{remark}

\begin{example}\label{ex:first}
 Let us compute the UI of  the  formula $\exists e_0\, (g(z_4,e_0)=z_0 \land f(z_2,e_0)=g(z_3,e_0)\land h(f(z_1,e_0))
 %silvio 16/2 =z_0 was missed
 =z_0)$. 
  Flattening gives the set of literals
 \begin{equation}\label{ex:first1}
  g(z_4,e_0)=z_0\land e_1=f(z_2,e_0) \land e_1= g(z_3,e_0)\land e_2=f(z_1,e_0)\land h(e_2)=z_0
 \end{equation}
 where the newly introduced variables 
 %silvio 16/2 $e_0, e_1$ 
 $e_1, e_2$
 need to be eliminated too. Applying (4.0) removes
 $g(z_3,e_0)=e_1$ and introduces the new equalities $z_3=z_4$,  $e_1=z_0$. This causes $e_1$ to be renamed as $y_1$ by (2). Applying again (4.0) removes $f(z_1,e_0)= e_2$ and adds the equalities $z_1=z_2$, $e_2=y_1$; moreover, $e_2$ is renamed as $y_2$. To the literal $h(y_2)=z_0$ we can apply (3). The branch terminates with 
 $y_1=z_0\land y_2=y_1 \land z_1=z_2 \land z_3=z_4 \land h(y_2)=z_0 \land  f(z_2,e_0)= y_1 \land g(z_4, e_0)=z_0$. This produces $ z_1=z_2 \land z_3=z_4 \land h(z_0)=z_0$ as a first disjunct of the uniform interpolant. 
 
 The other branches produce $ z_1= z_2\land z_3\neq z_4 $, $ z_1\neq z_2\land z_3= z_4 $  and $ z_1\neq z_2\land z_3\neq z_4 $ as further disjuncts,
 so that the UI turns out to be equivalent  (by trivial logical manipulations) %alefinal: added the previous words
 to $ z_1=z_2 \land z_3=z_4 \to h(z_0)=z_0$.
 %xxxx\todo{Examples to be added}
\end{example}

\begin{example}  
%ale2021: added example, entire section
Consider the following example, taken from~\cite{GM}.
 We compute the cover of  the primitive formula $\exists e\, (s_1=f(z_3,e)\land s_2=f(z_4,e)\land t=f(f(z_1,e), f(z_2,e)))$, where $s_1, s_2, t$ are terms in $\uz$. %This example is taken from~\cite{GM}.
 
  Flattening gives the set of literals
 \begin{equation}\label{eq:gulv}
 ~f(z_3,e)=s_1~\land ~f(z_4, e)=s_2~ \land ~f(z_1,e)=e_1~\land ~f(z_2, e)=e_2~ \land ~ f(e_1,e_2)=t~
 \end{equation}
 where the newly introduced variables $e_1, e_2$
 need to be eliminated too. We use lists of integers to represent the nodes of the tree created by the tableaux-like algorithm (Figure~\ref{tree1}). 
 
 Applying (4.0) to the first and the second literals of \eqref{eq:gulv}, the first branch is created: node $[1]$ is generated, where $f(z_4,e)=s_2$ is removed and the new equalities $z_3=z_4$,  $s_1=s_2$ are introduced. Then, applying (4.0) to the first and the third literals of \eqref{eq:gulv}, we get a new branch: node $[1.1]$ is generated, where
 $f(z_1,e)=e_1$ is removed and the new equalities $z_3=z_1$,  $s_1=e_1$ are introduced. As shown in Figure~\ref{tree2}, this causes $e_1$ to be renamed as $y_5$ by (2). Applying again (4.0) to the first and the fourth literals of \eqref{eq:gulv}, new branch is created: in node $[1.1.1]$, $f(z_2,e)= e_2$ is removed and  the equalities $z_3=z_2$, $e_2=s_1$ are added; moreover, $e_2$ is renamed as $y_6$ by using (2). In addition, in node $[1.1.2]$, we obtain that no literal is canceled and the inequality  $z_3\neq z_2$ from the difference set of $f(z_3,e)$ and $f(z_2,e)$ is added. To all the newly introduced literals in $\uy, \uz$ in all the branches, we can apply (3). 
 
 The  branch of $[1.1.1]$ terminates with 
 $f(z_3,e)=s_1 \land  f(y_5,y_6)=t \land  z_3=z_4 \land s_1=s_2 \land z_1=z_3 \land s_1=y_5 \land z_3=z_2 \land s_1= y_6$. This produces Leaf $1$: $f(y_5,y_6)=t \land  z_3=z_4 \land s_1=s_2 \land z_1=z_3 \land s_1=y_5 \land z_3=z_2 \land s_1= y_6$ as a first disjunct of the uniform interpolant. 
 
 The  branch of $[1.1.2]$ terminates with 
 $f(z_3,e)=s_1 \land  f(y_5,e_2)=t \land f(z_2,e)=e_2 \land  z_3=y_4 \land s_1=s_2 \land z_1=z_3 \land s_1=y_5 \land z_3\neq z_2$. This produces Leaf $2$: $  z_3=z_4 \land s_1=s_2 \land z_1=z_3 \land s_1=y_5 \land z_3\neq z_2$ as a second disjunct of the uniform interpolant. 
 
 We also analyze a portion of the branch starting with node $[2]$: as a consequence of $(4.1)$, in node $[2]$  the new inequality $z_3\neq z_4$ is introduced. Then, applying (4.0) to the first and the third literals of \eqref{eq:gulv}, we get a new branch: as shown in Figure~\ref{tree3}, node $[2.1]$ is generated, where $f(z_1,e)=e_1$ is removed and the new equalities $z_3=z_1$,  $s_1=e_1$ are introduced.  The last equality causes $e_1$ to be renamed as $y_5$ by (2). Applying again (4.0) to the second and the fourth literals of \eqref{eq:gulv}, a new branch is created: in node $[2.1.1]$, $f(z_2,e)= e_2$ is removed and  the equalities $z_4=z_2$, $s_2=e_2$ are added; moreover, $e_2$ is renamed as $y_6$ by using again (2). To all the newly introduced literals in $\uy, \uz$ in all the branches, we can apply (3). 
 
 The  branch of $[2.1.1]$ terminates with  $f(z_3,e)=s_1 \land f(z_4, e)=s_2 \land f(y_5,y_6)=t  \land z_3\neq z_4 \land z_3=z_1 \land s_1=y_5 \land z_4=z_2\land s_2=y_6$
 %$f(z_3,e)=s_1 \land  f(y_5,y_6)=t \land  z_3=z_4 \land s_1=s_2 \land z_1=z_3 \land s_1=y_5 \land z_3=z_2 \land s_1= y_6$. 
 This produces Leaf $6$: $ f(y_5,y_6)=t \land z_3\neq z_4 \land z_3=z_1 \land s_1=y_5 \land z_4=z_2\land s_2=y_6$
 % $f(y_5,y_6)=t \land  z_3=z_4 \land s_1=s_2 \land z_1=z_3 \land s_1=y_5 \land z_3=z_2 \land s_1= y_6$ 
 as a first disjunct of the uniform interpolant.

 The algorithm generates 16 branches, each of them produces one disjunct of the output formula, so that the UI turns out to be equivalent to 
 $$z_3=z_4\to s_1=s_2 \land \bigwedge_{i,j\in \{3,4\}}  (z_1=z_i \land z_2=z_j \to t= f(s_{i-2},s_{j-2})  ~ . $$

 Notice that this is consistent with: 
 \begin{compactenum} 
 \item the formula in Leaf $1$, where $z_3=z_4$ implies $s_1=s_2$, and where $z_1=z_3$ and $z_2=z_4$ implies $t=f(s_1,s_2)$;
 \item  the formula in Leaf $2$, where $z_3=z_4$ implies $s_1=s_2$;
 \item  the formula in Leaf $6$, where  $z_1=z_3$ and $z_2=z_4$ implies $t=f(s_1,s_2)$.
 \end{compactenum}
 
 % \endinput
 
\begin{footnotesize}
 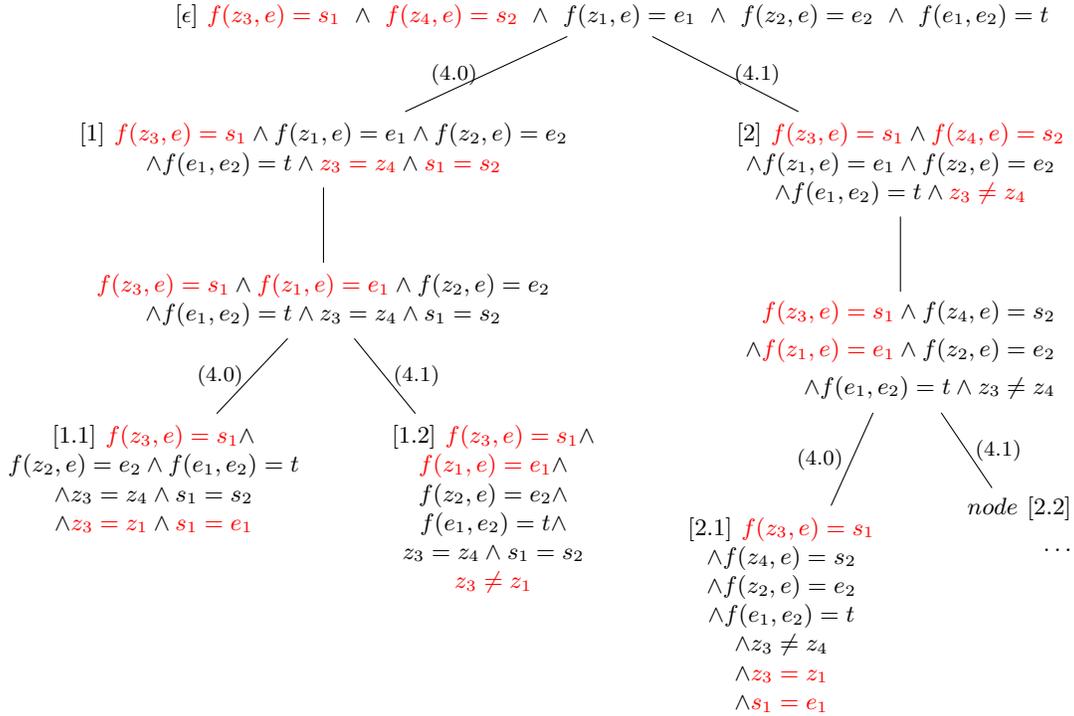
\begin{figure}[htbp]
   \begin{forest}
  for tree={l sep=1cm, s sep=1cm},
  %for tree={child anchor=north}
    [{$[\epsilon]~\textcolor{red}{f(z_3,e)=s_1}~\land ~\textcolor{red}{f(z_4, e)=s_2}~ \land ~f(z_1,e)=e_1~\land ~f(z_2, e)=e_2~ \land ~ f(e_1,e_2)=t$},treenode
         [{$[1]~\textcolor{red}{f(z_3,e)=s_1}\land f(z_1,e)=e_1 \land f(z_2, e)=e_2$ \\ $\land  f(e_1,e_2)=t \land \textcolor{red}{z_3=z_4} \land \textcolor{red}{s_1=s_2}$}, payoff, edge label={node[midway,left,font=\scriptsize]{$(4.0)$}}
                  [{$\textcolor{red}{f(z_3,e)=s_1}\land \textcolor{red}{f(z_1,e)=e_1} \land f(z_2, e)=e_2$ \\ $\land  f(e_1,e_2)=t \land z_3=z_4 \land s_1=s_2$}, payoff, edge label={node[midway, left, font=\scriptsize]{}}
                  [{$[1.1]~\textcolor{red}{f(z_3,e)=s_1} \land $ \\ $ f(z_2, e)=e_2 \land  f(e_1,e_2)=t $ \\ $ \land z_3=z_4 \land s_1=s_2$ \\ $\land \textcolor{red}{z_3=z_1} \land \textcolor{red}{s_1=e_1}$}, payoff, edge label={node[midway, left, font=\scriptsize]{$(4.0)$}}
                  ]
                  [{$[1.2]~ \textcolor{red}{f(z_3,e)=s_1}\land $ \\ $\textcolor{red}{f(z_1,e)=e_1} \land $ \\ $ f(z_2, e)=e_2 \land $ \\ $  f(e_1,e_2)=t \land  $ \\ $z_3=z_4  \land  s_1=s_2 $ \\ $ \textcolor{red}{z_3 \neq z_1} $}, payoff, edge label={node[midway, right, font=\scriptsize]{$(4.1)$}}]
         ]]
         [{$[2]~\textcolor{red}{f(z_3,e)=s_1}\land \textcolor{red}{f(z_4, e)=s_2}$ \\ $\land f(z_1,e)=e_1 \land f(z_2, e)=e_2$ \\ $\land f(e_1,e_2)=t  \land  \textcolor{red}{z_3\neq z_4}$}, payoff, edge label={node[midway, right, font=\scriptsize]{$(4.1)$}}
         [{$\begin{aligned}~\textcolor{red}{f(z_3,e)=s_1}\land f(z_4, e)=s_2 \\ \land \textcolor{red}{f(z_1,e)=e_1} \land f(z_2, e)=e_2 \\ \land f(e_1,e_2)=t  \land  z_3\neq z_4\end{aligned}$},treenode,edge label={node[midway,right,font=\scriptsize]{}}
          [{$[2.1] ~ \textcolor{red}{f(z_3,e)=s_1}$ \\ $\land f(z_4, e)=s_2$\\  $\land f(z_2, e)=e_2$ \\ $\land f(e_1,e_2)=t$ \\ $\land  z_3\neq z_4$ \\ $\land \textcolor{red}{z_3=z_1}$ \\ $\land \textcolor{red}{s_1=e_1}$}, payoff, edge label={node[midway, left, font=\scriptsize]{$(4.0)$}}]
           [{$\begin{aligned} node\ [2.2]\\ \dots\end{aligned}$},treenode,edge label={node[midway,right,font=\scriptsize]{$(4.1)$}}]
         ]
              ]
    ]
  \end{forest}
 \caption{Portion of the tree created by the tableaux-like algorithm}\label{tree1}
\end{figure}

%$[2.1] ~ \textcolor{red}{f(y_3,e)=s_1}$ \\ $\land f(y_4, e)=s_2$\\  $\land f(y_2, e)=e_2$ \\ $\land f(e_1,e_2)=t$ \\ $\land  y_3\neq y_4$ \\ $\land \textcolor{red}{y_3=y_1}$ \\ $\land \textcolor{red}{s_1=e_1}$
\end{footnotesize}

 \begin{figure}[htbp]
   \begin{forest}
  for tree={l sep=1.6cm, s sep=2.2cm},
  %for tree={child anchor=north}
  [{$[1.1]~\textcolor{red}{f(z_3,e)=s_1} \land  f(z_2, e)=e_2 \land  f(e_1,e_2)=t $ \\ $ \land z_3=z_4 \land s_1=s_2\land \textcolor{red}{z_3=z_1} \land \textcolor{red}{s_1=e_1}$}, payoff, edge label={node[midway, left, font=\scriptsize]{$(4.0)$}}
                  [{$\textcolor{red}{f(z_3,e)=s_1} \land \textcolor{red}{f(z_2, e)=e_2} \land  f(y_5,e_2)=t $ \\ $ \land z_3=z_4 \land s_1=s_2 \land z_3=z_1 \land s_1=y_5$}, payoff, edge label={node[midway, left, font=\scriptsize]{$(2)[e_1\to y_5]$}}
                  [{$[1.1.1]~\textcolor{red}{f(z_3,e)=s_1} \land $ \\ $   f(y_5,e_2)=t  \land z_3=z_4 $ \\ $ \land s_1=s_2\land z_3=z_1 $ \\ $ \land  s_1=y_5 \land \textcolor{red}{z_3=z_2} \land \textcolor{red}{s_1=e_2}$}, payoff, edge label={node[midway, left, font=\scriptsize]{$(4.0)$}}
                   [{$[\texttt{leaf}~1]~\xcancel{f(z_3,e)=s_1} \land $ \\ $   f(y_5,y_6)=t  \land z_3=z_4 $ \\ $ \land s_1=s_2\land z_3=z_1 $ \\ $ \land  s_1=y_5 \land z_3=z_2 \land s_1=y_6$}, payoff, edge label={node[midway, right, font=\scriptsize]{$(2)[e_2\to y_6]$}}]
                  ]
                   [{$[1.1.2]~\textcolor{red}{f(z_3,e)=s_1} \land $ \\ $ \textcolor{red}{f(z_2, e)=e_2} \land  f(y_5,e_2)=t $ \\ $ \land z_3=z_4 \land s_1=s_2$ \\ $\land z_3=z_1 \land s_1=y_5 \land \textcolor{red}{z_3\neq z_2}$}, payoff, edge label={node[midway, right, font=\scriptsize]{$(4.1)$}}
                   [{$[\texttt{leaf}~2] ~ \xcancel{f(z_3,e)=s_1} \land $ \\ $ \xcancel{f(z_2, e)=e_2} \land  \xcancel{f(y_5,e_2)=t} $ \\ $ \land z_3=z_4 \land s_1=s_2$ \\ $\land z_3=z_1 \land s_1=y_5 \land z_3\neq z_2 $}, payoff, edge label={node[midway, right, font=\scriptsize]{}}]
                   ]
                   ]
                   ]
 \end{forest}
 \caption{Portion of Branch 1}\label{tree2}
\end{figure}

\begin{footnotesize}
 \begin{figure}[htbp]
   \begin{forest}
  for tree={l sep=1cm, s sep=1.5cm},
  %for tree={child anchor=north}
                     [{$[2]~\textcolor{red}{f(z_3,e)=s_1}\land \textcolor{red}{f(z_4, e)=s_2}\land f(z_1,e)=e_1 \land f(z_2, e)=e_2$ \\ $\land f(e_1,e_2)=t  \land  \textcolor{red}{z_3\neq z_4}$}, payoff, edge label={node[midway, right, font=\scriptsize]{$(4.1)$}}
         [{$~\textcolor{red}{f(z_3,e)=s_1}\land f(z_4, e)=s_2 \land \textcolor{red}{f(z_1,e)=e_1}$ \\ $\land f(z_2, e)=e_2  \land f(e_1,e_2)=t  \land  z_3\neq z_4$},payoff,edge label={node[midway,right,font=\scriptsize]{}}
          [{$[2.1] ~ \textcolor{red}{f(z_3,e)=s_1}$ \\ $\land f(z_4, e)=s_2 \land f(z_2, e)=e_2$ \\ $\land f(e_1,e_2)=t\land  z_3\neq z_4$ \\ $\land \textcolor{red}{z_3=z_1}\land \textcolor{red}{s_1=e_1}$}, payoff, edge label={node[midway, left, font=\scriptsize]{$(4.0)$}}
          [{$ f(z_3,e)=s_1$ \\ $\land \textcolor{red}{f(z_4, e)=s_2} \land \textcolor{red}{f(z_2, e)=e_2}$ \\ $\land f(y_5,e_2)=t\land  z_3\neq z_4$ \\ $\land z_3=z_1\land s_1=y_5$}, payoff, edge label={node[midway, left, font=\scriptsize]{$(2)[e_1\to y_5]$}}
           [{$[2.1.1]  f(z_3,e)=s_1$ \\ $\land \textcolor{red}{f(z_4, e)=s_2} $ \\ $ \land f(y_5,e_2)=t \land  z_3\neq z_4$ \\ $\land z_3=z_1 \land s_1=y_5$ \\ $ \textcolor{red}{z_4=z_2}\land\textcolor{red}{s_2=e_2} $}, payoff, edge label={node[midway, left, font=\scriptsize]{$(4.0)$}}
           [{$[\texttt{leaf}~6] ~ \xcancel{f(z_3,e)=s_1}$ \\ $\land \xcancel{f(z_4, e)=s_2} $ \\ $ \land f(y_5,y_6)=t \land  z_3\neq z_4$ \\ $\land z_3=z_1 \land s_1=y_5$ \\ $ z_4=z_2\land s_2=y_6  $}, payoff, edge label={node[midway, left, font=\scriptsize]{$(2)[e_2\to y_6]$}}]
	  ]
            [{$[2.1.2] ~ f(z_3,e)=s_1$ \\ $\land \textcolor{red}{f(z_4, e)=s_2} \land \textcolor{red}{f(z_2, e)=e_2}$ \\ $\land f(y_5,e_2)=t\land  z_3\neq z_4$ \\ $\land z_3=z_1\land s_1=y_5 \land \textcolor{red}{z_4\neq z_2 }$}, payoff, edge label={node[midway, right, font=\scriptsize]{$(4.1)$}}
             [{$[\texttt{leaf}~7] ~ \xcancel{f(z_3,e)=s_1}$ \\ $\land \xcancel{f(z_4, e)=s_2} \land \xcancel{f(z_2, e)=e_2}$ \\ $\land \xcancel{f(y_5,e_2)=t}\land  z_3\neq z_4$ \\ $\land z_3=z_1\land s_1=y_5 \land z_4\neq z_2 $}, payoff, edge label={node[midway, right, font=\scriptsize]{$(4.1)$}}]
            ]
            ]
          ]
           [{$[2.2] ~ \textcolor{red}{f(z_3,e)=s_1}\land f(z_4, e)=s_2$ \\ $\land \textcolor{red}{f(z_1,e)=e_1} \land f(z_2, e)=e_2$ \\ $\land f(e_1,e_2)=t\land  z_3\neq z_4\land \textcolor{red}{z_3\neq z_1}$}, payoff, edge label={node[midway, right, font=\scriptsize]{$(4.1)$}} 
           [{$node \ [2.2.1]$\\ $\dots$}, payoff, edge label={node[midway, right, font=\scriptsize]{$(4.0)$}}] 
           [{$node \ [2.2.2]$ \\ $\dots$}, payoff, edge label={node[midway, right, font=\scriptsize]{$(4.1)$}}] 
                 ]
            ]
              ]
 \end{forest}
 \caption{Portion of Branch $2$}\label{tree3}
\end{figure}
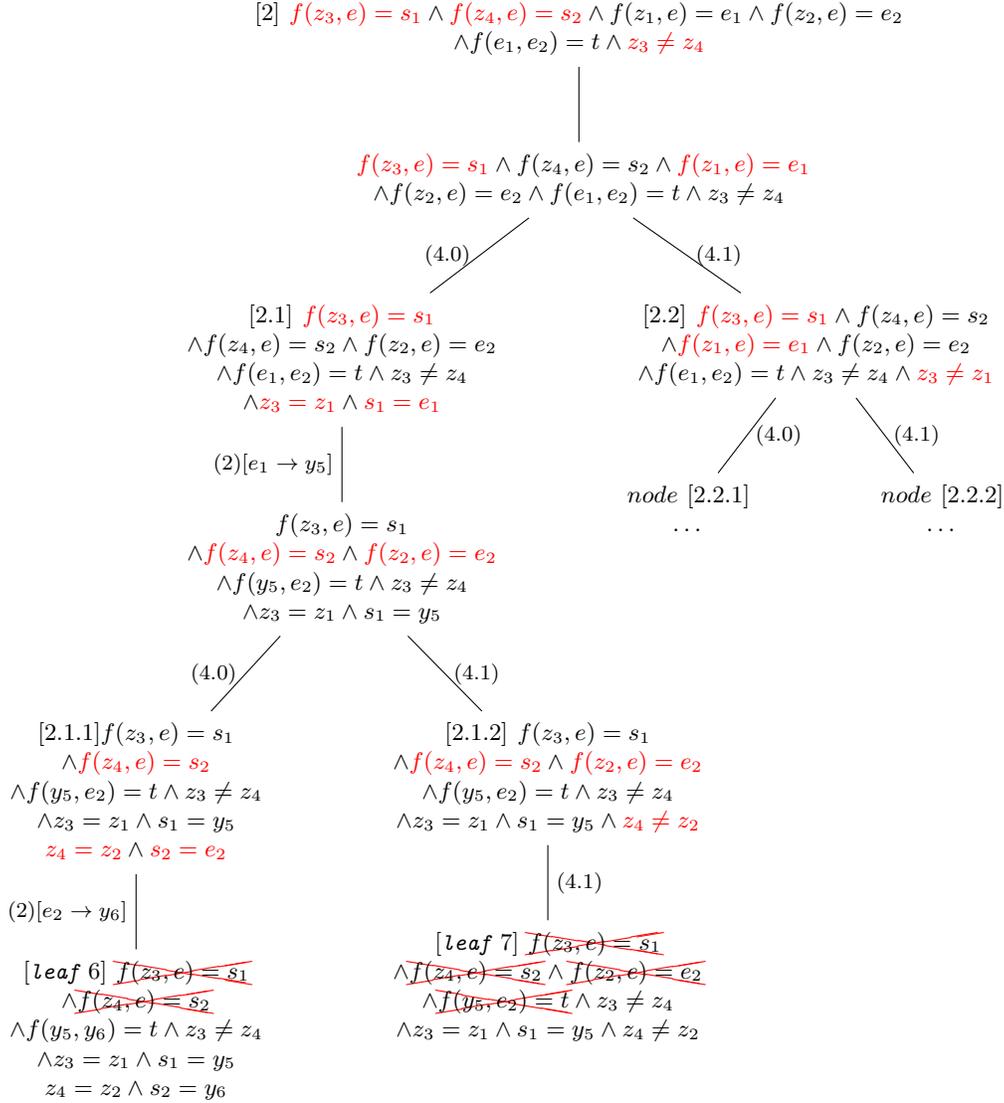
\end{footnotesize}

\end{example}

\section{The Conditional Algorithm}\label{sec:cond}
%DK
This section discusses a new algorithm with the objective of
generating a compact representation of the UI in \EUF: this representation avoids
splitting and is based on %ale: I propose to rephrase  the previous part of the sentence because I find it hard to read
conditions in Horn clauses generated
from literals whose left sides have the same function symbol. %AGapril: I rephrased the previoues sentence
%silviofinal I emphasized the effect on size (this adjusts some referee's observations)
A
by-product of this approach is that the size of the output UI often can  be kept polynomial. 
%
%often the UI can be computed
%in polynomial time.
Further, the output of this algorithm 
generates the UI of $\exists \ue\, \phi(\ue, \uz)$  (where
$\phi(\ue, \uz)$ is a conjunction of literals and $\ue=e_0,\dots,
e_N$, $\uz=z_0, \dots, z_M$, as usual) in conjunctive normal form 
%silviofinal rewritten following referee's request
as a conjunction of Horn clauses (we recall that a Horn clause is a disjunction of literals containing \emph{at most one} positive literal).
%
%with literals $f(a_1, \dots, a_h)= a, ~ a = b, ~ a \neq b$ and
%conditional Horn equations 
%(i.e. Horn clauses whose atoms are equations). 
Toward this goal, a new data
structure of a conditional DAG, a generalization of a DAG, is
introduced so as to maximize sharing of sub-formulas.

%DK: should we not have a = b also?
Using the core preprocessing procedure explained in
Subsection~\ref{sec:CC}, it is assumed that   $\phi$ is the
conjunction $\bigwedge S_1$, where  $S_1$ 
%silviofinal 
is a set 
%are set 
of \emph{flat} literals  %ale: I modified the part before, adding also flat
containing only literals of the following two  kinds: %ale: pay attention! the 'three kinds' are just two!
%silvio  right, I made the correction
\begin{equation}\label{eq:1}
 f(a_1, \dots, a_h)= a
\end{equation}
%\begin{equation}\label{eq:2}
% a = b
%\end{equation}
\begin{equation}\label{eq:2}
 a\neq b
\end{equation}
(recall that we use letters $a, b, \dots$ for elements of $\ue\cup \uz$). 
%silvio feb 21
Since literals not involving variables to be eliminated or supplying an explicit definition of one of them can be moved directly to the output,
%In addition 
we can assume that variables in $\ue$ must occur in~\eqref{eq:2} and in the left side of~\eqref{eq:1}. %\todo{Shall we assume more?}
We do not include equalities like 
$a=e$
%silvio feb 21 $a=b$ 
because they can be eliminated by replacement.

\subsection{The Conditional Algorithm}\label{subsec:condalg}

The algorithm requires two steps in order to get a set of clauses representing the output in a suitably compressed format.
\vskip 2mm
\framebox{\textbf{Step 1}.} Out of every pair of literals $f(a_1, \dots, a_h)= a$ and 
$f(a'_1, \dots, a'_h)= a'$ of the kind~\eqref{eq:1} (where 
%silvio 1-10-21 rewritten, following referee's complaint
%$a\not\equiv a'$), 
$a$ is syntactically different from $a'$)
we produce the Horn clause
\begin{equation}\label{eq:3}
  a_1=a_1', \dots, a_h=a'_h \to a=a'
\end{equation}
 which can be further simplified by deleting identities in the antecedent. %ale4oct 
Let us call $S_2$ the set of clauses obtained from $S_1$ by adding these new Horn clauses to it.
\vskip 2mm
\framebox{\textbf{Step 2}.}
We saturate $S_2$ with respect to the following rewriting rule

\vskip 2mm
\noindent\begin{tabularx}{\textwidth}{ccl}
\begin{tabular}{@{}c@{}}
\bfseries % Superposition
\end{tabular}
&
$\begin{array}{c}
\Gamma \to e_j=e_i\quad\quad C
\\
\hline  \Gamma \to C[e_i]_p 
\\
\end{array}$
&
\\
\end{tabularx}
\vskip 2mm\noindent
where $j>i$; $C[e_i]_p$ means the result of the replacement of $e_j$ by $e_i$ in the position $p$ of the clause $C$ (for some position $p$ in $C$ where $e_j$ appears) %ale2021:  added explanation to answer Deepak's question
 and
$\Gamma \to C[e_i]_p$ is the clause obtained by merging $\Gamma$ with the antecedent of the clause 
$C[e_i]_p$.\footnote{ 
%silvio 4-10-21
%added pedantic footnote: this points out implicitly to Gulwani-Musuvathi bug
Negative literals like~\eqref{eq:2} are  Horn clauses of the kind $a=b\to$  and in Step 2 may be responsible of producing non-unit negative Horn clauses.
}

Notice that we apply the rewriting rule only to conditional equalities of the kind $\Gamma \to e_j=e_i$: this is because clauses like $\Gamma \to e_j=z_i$ are considered `conditional definitions' (and the clauses like $\Gamma \to z_j=z_i$ as `conditional facts').

We let $S_3$ be the set of clauses obtained from $S_2$ by saturating it with respect to the above rewriting rule, by removing from antecedents identical literals of the kind $a=a$ and by removing subsumed clauses. %\todo{ Ideas for further simplifications?}

\begin{example}\label{ex:cdag}
 Let $S_1$ be the set of the following literals
 \begin{eqnarray*}
  f_1(e_0, z_1)= e_1,~~   
  & 
  f_1(e_0, z_2)= z_3,~~
  &
  f_2(e_0, z_4)= e_2,   
  \\
  f_2(e_0, z_5)= z_6,~~
  &
  g_1(e_0, e_1)= e_2,~~   
  & 
  g_1(e_0, z_1')= z_2',
  \\
  g_2(e_0, e_2)= e_1,~~   
  & 
  g_2(e_0, z''_1)= z_2''~~
  &
  h(e_1,e_2)= z_0
 \end{eqnarray*}
Step 1 produces the following set $S_2$ of Horn clauses
\begin{eqnarray*}
&
 z_1=z_2 \to e_1= z_3,~~
  &
 z_4=z_5 \to e_2= z_6,
 \\ &
 e_1=z'_1 \to e_2= z'_2,~~
  &
 e_2=z''_1 \to e_1= z''_2
\end{eqnarray*}
Since there are no Horn clauses whose consequent is  an equality of the kind $e_i=e_j$, Step 2 does not produce further clauses and we have $S_3=S_2$.
\end{example}

%\vskip 2mm

%\framebox{\textbf{Step 4}.} This step extracts the cover from $S_3$ by unravelling the implicit definitions contained in it. The precise unravelling procedure is described in Theorem~\ref{thm:main}  below.

\subsection{Conditional DAGs}

In order to be able to extract the output UI in a uncompressed format %alefinal: I have deleted 'plain' and just left uncompressed 
out of the above set of clauses $S_3$, we must identify all the  `implicit conditional definitions' it contains. As for illustration, Example~\ref{ex:cdag} contains, among the others, the following `implicit' conditional definitions: the variable $e_1$ can be conditionally defined with $e_1:= z_3$, when $z_1=z_2$ holds (because $z_1=z_2\to e_1= z_3$ is in $S_3$); moreover,  when $e_1$ becomes conditionally defined, also $e_2$ becomes implicitly defined as $e_2:= z'_2$, under the condition that $e_1=z'_1$ holds (because $e_1=z'_1 \to e_2= z'_2$ is in $S_3$ as well). All such conditional definitions need to be made explicit, and this is what we are going to formalize in the following.  %ale22oct2021: added the previous examples as deepak requested

Let $\uw$ be an ordered subset of the $\ue=\{e_1, \dots, e_N\}$: that is, in order to specify $\uw$ we must take a subset of the $\ue$ and an ordering of this subset. Intuitively, these $\uw$ will play the role of \emph{placeholders} inside  a conditional definition. %AGapril: as noticed by the first reviewer, in the rest of the paper we begin the list of $e$ with 0, and not 1, so maybe it's better to be consistent. In this case, we should change also below '$e_{k_i}$ with $k_i\in \{1, \dots, N\}$'

If we let $\uw$ be $w_1, \dots, w_s$ (where, say,
$w_i$ is some $e_{k_i}$ with $k_i\in \{1, \dots, N\}$), we let $L_i$ be the language restricted to $\uz$ and $w_1, \dots, w_i$ (for $i\leq s$): in other words, an $L_i$-term or an $L_i$-clause may contain only terms built up from $\uz,w_1, \dots, w_i$ by applying to them function symbols. In particular, $L_s$ (also called $L_{\uw}$) is the language restricted to $\uz\cup \uw$. We let $L_0$ be the language restricted to $\uz$.

%AGapril: how to give an intuitive idea of what a conditional DAG is? Maybe 'a conditional graph induced by conditional equalities between variables ?'
Given a set $S$ of clauses and $\uw$ as above, a \emph{$\uw$-conditional} DAG $\delta$ (or simply a conditional DAG $\delta$) built out of $S$ is 
%silvio feb 2021 a set 
an $s$-tuple of Horn clauses from $S$
\begin{equation}\label{eq:cDAG}
\Gamma_1 \to w_1 = t_1, ~\dots,~ \Gamma_s \to w_s = t_s
\end{equation}
where $\Gamma_i$ is a finite tuple of $L_{i-1}$-atoms and $t_i$ is an 
$L_{i-1}$-term. %ale2021
Intuitively, a conditional DAG takes into consideration the given ordered subset $\uw$
of the $\ue$ symbols and on top of that it builds a sequence of conditional equations (i.e., Horn clauses), where step by step more and more $\ue$ symbols are employed and iteratively defined in terms of previously already defined $\ue$ symbols that are precedent in the order. Roughly speaking, each set of ``dependencies'' induced among the $\ue$-symbols provides a suitable set of (conditional) definitions. %ale22oct2021: added the previous sentece

Conditional DAGs are used to define suitable formulae that are needed for the construction of the uniform interpolant. We now define such formulae.  %ale2021
Given a $\uw$-conditional DAG $\delta$ we can define the  formulae 
$\phi^i_{\delta}$ (for $i=1, \dots, s+1$) as follows:
\begin{compactenum}
 \item[-] $\phi_{\delta}^{s+1}$ is the conjunction of all $L_{\uw}$-clauses belonging to $S$;
 \item[-] for $i\leq s$, the formula $\phi_{\delta}^{i}$ is $\Gamma_i \to \forall w_i\, (w_i=t_i \to \phi_{\delta}^{i+1})$. 
\end{compactenum}
It can be seen %alefinal: i deleted 'it is clear'
 that $\phi^i_{\delta}$ is equivalent to a quantifier-free $L_{i-1}$ formula,\footnote{
%silvio 1-0-21 expanded the footnote as required by the referee
%It can be shown that such a formula can be turned, again up to equivalence, into a conjunction of Horn clauses.
Since $\phi^i_{\delta}$ is logically equivalent to $(\bigwedge \Gamma_i) \to  \phi_{\delta}^{i+1}(t_i/w_i)$, it is immediate to see 
that it can be recursively turned, again up to equivalence, into a conjunction of Horn clauses.
} in particular $\phi^1_{\delta}$ (abbreviated as $\phi_{\delta}$) is equivalent to an $L_0$-quantifier-free formula. The explicit computation of such quantifier-free formulae may however produce an exponential blow-up.    %ale2021
The intuition behind these constructions is that in order to produce the correct uniform interpolant one needs to consider \emph{all} the possible $L_{\uw}$-clauses in $S$ and then iteratively `eliminate' the  $\ue$ symbols in them by exploiting the conditional definitions of the $\ue$ symbols (determined by the conditional DAG $\delta$) in terms of previously defined $\ue$ symbols in the order of $\uw$.

%AGapril: I would add here an intuitive explanation of the meaning of $\phi_{\delta}$: any suggestion? I would suggest to write something like 'the unfolding, guided by the conditional DAG \delta, of the conditional definitions of the $e$ variables, that produces a formula which is $ue$-free. If you find something more precise but at the same time clearer, feel free to change and improve my initial proposal

\begin{example}
%silviofinal reformulated to help reading (taking into account the referee observations)
Let us analyze the conditional DAG $\delta$ that can be extracted out of 
%Consider 
the set $S_3$ of the  Horn clauses mentioned in Example~\ref{ex:cdag} (we disregard those $\delta$ such that $\phi_{\delta}$ is the empty conjunction $\top$).
%
%aleFeb2021: rephrasing
%We can get not logically equivalent formulae for 
%$\phi_{\delta_1}$  and $\phi_{\delta_2}$  considering 
The $\uw_1$-conditional DAG $\delta_1$ with $\uw_1= e_1, e_2$ 
and conditional definitions 
%silvio 16/2 we have a third cdag which was missed; I avoided to display certain less important formulae to save space
%AGapril: since it is a journal version, maybe we should make the preliminaries bigger adding notations and definition for positions and substitutions
$$
z_1=z_2 \to e_1= z_3,~~e_1=z'_1 \to e_2= z'_2
$$
where $e_2$ depends upon $e_1$, produces formula $\phi_{\delta_1}$, and similarly   %ale22oct2021: changed as deepak requested
the $\uw_2$-conditional DAG $\delta_2$ with $\uw_2=e_2, e_1$
and conditional definitions 
$$
z_4=z_5 \to e_2= z_6,~~e_2=z''_1 \to e_1= z''_2.
$$
where $e_1$ depends upon $e_2$, produces formula $\phi_{\delta_2}$: notice that $\phi_{\delta_1}$ and $\phi_{\delta_2}$ are not logically equivalent. %ale22oct2021: changed as deepak requested

Indeed, $\phi_{\delta_1}$ is logically equivalent to
\begin{equation}\label{eq:uno} %ale2021: as requested by Deepak, I have changed the notation for  s_3^{-0}
%silvio 16/2 I used S_3^{-0} to make evident in the notation too that clauses not containing e_0 are disregarded
z_1=z_2 \wedge z_3=z'_1\to \bigwedge S_3\setminus \{ e_{0}\}[z_3/e_1, z'_2/e_2]~~
\end{equation}
%silvio22oct definition moved here from below
where we used the notation $\bigwedge S_3\setminus \{ e_{0}\}[z_3/e_1, z'_2/e_2]$ to mean the result of the substitution of  $e_1$ with $z_3$ %ale: not 'with'?  %AGapril: cambiato la frase
  and of  $e_2$  with $z'_2$ %ale: not 'with'? 
 in the conjunction of $S_3$-clauses not involving $e_0$.
Notice that, intuitively, this formula is obtained by iteratively defining, step by step, bigger $\ue$ variables in terms of smaller ones: when  $z_1=z_2 \wedge z_3=z'_1$ holds, $e_1$ is conditionally replaced by $z_3$ and $e_2$ by $z'_2$. %ale22oct2021: added explanation as deepak requested

Analogously, $\phi_{\delta_2}$ is logically equivalent to
\begin{equation}\label{eq:due}
z_4=z_5 \wedge z_6=z''_1\to \bigwedge S_3\setminus \{ e_{0}\}[z_6/e_2,z''_2/e_1]
\end{equation}
%where we used the notation $\bigwedge S^{-e_{0}}_3[z_3/e_1, z'_2/e_2]$ to mean the result of the substitution of  $e_1$ with $z_3$ 
%  and of  $e_2$  with $z'_2$ 
% in the conjunction of $S_3$-clauses not involving $e_0$
%silvio22oct modified 
 (the explanation of the notation  $S_3\setminus \{ e_{0}\}[z_6/e_2,z''_2/e_1]$
 is the same as the explanation for the notation $\bigwedge S_3\setminus \{ e_{0}\}[z_3/e_1, z'_2/e_2]$ used above).
%silvio 16/2 added missed case
 A third possibility is to use the conditional definitions $z_1=z_2 \to e_1= z_3$ and $z_4=z_5 \to e_2= z_6$ with (equivalently) either $\uw_1$ or $\uw_2$ resulting in a conditional dag $\delta_3$ with  $\phi_{\delta_3}$ logically equivalent to 
 \begin{equation}\label{eq:tre}
 z_1=z_2 \wedge z_4=z_5 \to \bigwedge S_3\setminus \{ e_{0}\}[z_3/e_1,z_6/e_2]~~.
\end{equation}
\end{example}

The next lemma %(proved in \cite{GGK20}) 
shows the relevant property of $\phi_{\delta}$:

\begin{lemma}\label{lem:oneside} For every set of clauses $S$ and for every 
$\uw$-conditional DAG $\delta$ built out of $S$, 
%we have that 
the formula
$
\bigwedge S \to \phi_{\delta}
$
is logically valid.
\end{lemma}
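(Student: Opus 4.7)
The plan is to prove by downward induction on $i$ the stronger claim that $\bigwedge S \models \phi_{\delta}^i$ for every $1\leq i\leq s+1$; specializing to $i=1$ yields the lemma, since $\phi_{\delta} = \phi_{\delta}^1$. I would first set up notation, noting that the free variables of $\phi_{\delta}^i$ are contained in $\uz\cup\{w_1,\dots,w_{i-1}\}$, whereas $\bigwedge S$ has free variables in $\uz\cup\ue$.

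For the base case $i=s+1$, by definition $\phi_{\delta}^{s+1}$ is the conjunction of a certain subset of the clauses in $S$ (namely those that happen to be $L_{\uw}$-clauses), so $\bigwedge S \models \phi_{\delta}^{s+1}$ is immediate.

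For the inductive step, suppose $\bigwedge S \models \phi_{\delta}^{i+1}$. Fix an arbitrary model $\cM$ and an assignment $\cI$ to the free variables $\uz\cup\ue$ under which $\bigwedge S$ is true. I would show that $\phi_{\delta}^i = \Gamma_i \to \forall w_i\,(w_i = t_i \to \phi_{\delta}^{i+1})$ holds under $\cI$. If $\Gamma_i$ fails under $\cI$ there is nothing to prove; otherwise the key observation is that the clause $\Gamma_i \to w_i = t_i$ is itself a member of $S$ by the very definition of a $\uw$-conditional DAG, hence it is satisfied under $\cI$, and combined with $\cI\models \Gamma_i$ this forces $\cI(w_i) = t_i[\cI]$. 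Now for any element $v\in|\cM|$ satisfying $v = t_i[\cI]$, since $t_i$ is an $L_{i-1}$-term and therefore does not mention $w_i$, the value $v$ must equal $\cI(w_i)$, so the reassigned interpretation $\cI[w_i\mapsto v]$ coincides with $\cI$ itself. The inductive hypothesis then gives $\cM,\cI \models \phi_{\delta}^{i+1}$, closing the step.

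The only delicate point, and what I expect to be the main conceptual obstacle, is handling the universal quantifier $\forall w_i$ for a variable $w_i\in\ue$ that is also free in $\bigwedge S$: one has to notice that the guard $w_i = t_i$ pins down $w_i$ to a unique value which, thanks to the conditional defining clause being literally one of the clauses of $S$, is already the value $w_i$ has in $\cI$. Once this is recognized, the quantifier becomes effectively vacuous and the induction goes through without any additional manipulation.
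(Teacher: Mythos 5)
Your proof is correct and takes essentially the same approach as the paper: a downward induction on $i$ with the key observation that $\Gamma_i \to w_i = t_i$ belongs to $S$, which pins down the value of $w_i$ and renders the universal quantifier over $w_i$ harmless. The only difference is presentational — you phrase the inductive step semantically via reassignment of $w_i$, while the paper phrases it syntactically in natural deduction via transitivity and equality replacement.
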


\begin{proof}
%silvio 16/2 various indexes _i were missed
 We prove that $ \bigwedge S \to \phi_{\delta}^i$ is valid by induction on $i$. The base case is clear. For the case $i\leq s$, proceed, e.g., in natural deduction as follows: %ale: why 'natural deduction'? it seems like the name of the proof calculus
 %silvio because this argument strictly follows a natural deduction proof (following say sequnt calculus wouls be more difficult to explain); I added ``e.g. in natural deduction'' to say that the reader may use the calculus he prefers
 assume $S, \Gamma_i$ and $\tilde w_i=t_i$ in order to prove $\phi_{\delta}^{i+1}(\tilde w_i/w_i)$. Since $\Gamma_i\to w_i=t_i\in S$, then 
 by implication elimination you get $w_i=t_i$ and also $w_i=\tilde w_i$ by transitivity of equality. Now
 you get what you need from induction hypothesis and equality replacement.
\end{proof}

Notice that it is not true that the conjunction of all possible $ \phi_{\delta}$ (varying $\delta$ and $w$) implies $\bigwedge S$: in fact, such a conjunction %AGapril: saying 'such a conjunction' it seems that we are mentioning $\bigwedge S$, whereas we are mentioning the conjunction of all possible $ \phi_{\delta}$, so we should write it more explicitly.
can be empty %(hence equal to $\top$) 
%silviofinal I removed the sentence below becayuse it was not precise: if there are e-free clauses in S_3 the empty  CDAG can produce something non trivial, like in example 4.6
for instance in case $S$ is just $\{e_1=e_2\}$.
% there is no conditional DAG built up from $S$ at all
%(this happens  for instance if $S$ is just $e_1=e_2$).

\subsection{Extraction of UI's}

We shall prove below that in order to get a UI of $\exists \ue\, \phi(\ue, \ua)$, one can take \emph{the conjunction of all possible $ \phi_{\delta}$, varying $\delta$ among the  conditional DAGs that can be built out of the set of clauses $S_3$}  from Step 2 of the above algorithm. %ale2021 
We highlight that in order to generate the correct output (i.e., the uniform interpolant), one needs to consider \emph{all} the possible conditional DAGs built out of the set of clauses $S_3$  from Step 2, which implies to take into consideration all DAGs that can be defined considering \emph{all} possible ordered subsets  $\uw$.

\begin{example}
If $\phi$ is the conjunction of the literals of Example~\ref{ex:cdag}, then the conjunction of 
$\eqref{eq:uno}$, $\eqref{eq:due}$ 
%silvio 16/2 reported missed case here too
and  $\eqref{eq:tre}$ is a UI of $\exists \ue \,\phi$; in fact, no further  non-trivial conditional dag $\delta$ can be  extracted (if we take $\uw=e_1$ or $\uw=e_2$ or $\uw=\emptyset$ to extract $\delta$, then it happens that $\phi_{\delta}$ is the empty conjunction $\top$).
\end{example}

\begin{example} Let us turn to the literals~\eqref{ex:first1} of Example~\ref{ex:first}. Step 1 produces out of them the conditional clauses
\begin{equation}\label{ex:first2}
 z_3=z_4\to e_1=z_0, \qquad z_1=z_2\to e_2=e_1~~.
\end{equation}
Step 2 produces by rewriting the further  clauses $z_1=z_2\to f(z_1, e_0)= e_1$ and $z_1=z_2\to 
%silvio 16/2 h(z_0)=z_0$.
h(e_1)=z_0$.
We can extract two conditional DAGs $\delta$  (using both the conditional definitions~\eqref{ex:first2} or just the first one); in both cases $\phi_{\delta}$ is $ z_1=z_2 \land z_3=z_4 \to h(z_0)=z_0$, which is the  UI. 
\end{example}

%\subsection{Complexity Issues}

As it should be evident from the two examples above,
the conditional DAGs representation of the output considerably reduces computational complexity in many cases; this is a clear advantage of the present algorithm over the algorithm from Section~\ref{sec:tab} and over other approaches like, e.g.~\cite{cade19}. %\todo{Should we say more? What more?} 
Still, the next example shows that  in some cases
 the overall complexity remains exponential.

\begin{example}\label{ex:exp}
 Let $\ue$ be $e_0, \dots, e_N$ and let $\uz$ be 
 $
\{z_0, z'_0\}\cup \{z_{i,j}, z'_{i,j} \mid 1\leq i<j\leq N\}.
 $
 Let $\phi(\ue, \uz)$ be the conjunction of the identities $f(e_0,e_1)=z_0$, $f(e_0,e_N)=z'_0$
 and the 
 %following 
 set of identities
 $
 h_{ij}(e_0,z_{ij})=e_i,~~~  h_{ij}(e_0,z'_{ij})=e_j,
 $
 varying $i, j$ such that $1\leq i<j\leq N$. We now show that applying the conditional algorithm we get an UI which is exponentially long. %ale22oct2021: added some sentences to structure the presentation

 After applying Step 1 of the algorithm presented in Subsection~\ref{subsec:condalg}, we get the Horn clauses
 $
 z_{ij}=z'_{ij} \to e_i=e_j,
 $
 as well as the clause $e_1=e_N \to z_0=z'_0$. 
 If we now apply Step 2, %it is clear that %alefinal: I deleted 'it is clear'
 we can never produce a conditional clause of the kind 
 %silvio 16/2 rewritten to make it more precise
 %$\Gamma \to e_i =a$ with $a\not\in \ue$ (because we can only rewrite some $e_i$ into some $e_j$ inside an equality of the kind $e_i=e_k$). 
 $\Gamma \to e_i =t$ with $t$ being $\ue$-free (because we can only rewrite some $e_i$ into some $e_j$).  
 Thus no sequence of clauses like~\eqref{eq:cDAG} can be extracted from $S_3$: notice in fact that the term $t_1$ from such a sequence must not contain the variables $\ue$. In other words, the only $\uw$-conditional DAG $\delta$ that can be extracted is based on the empty $\uw\subseteq \ue$ and is empty itself. 
 
 In order to extract the UI, we need to compute the formulae $\phi_{\delta}$ from any $\uw$-conditional DAG $\delta$, which is only one in such a case. However, this unique $\delta$ produces a formula $\phi_{\delta}$ that is %which %ale: commented %which   %ale22oct2021: added some sentences to structure the presentation
 %silviofinal added little more info about the UI extracted from S_3
  %in fact 
  quite big: it is the conjunction of  the 
  %exponentially many 
  clauses from $S_3$ where the $\ue$ do not occur 
  ($S_3$ contains in fact $\Gamma\to z_0=z'_0$ for exponentially many $\ue$-free $\Gamma$'s).
  %
%silvio feb 2021 I uncommented the paragraph below: we have enough space!

  We conclude this example by commenting on the reason why $\phi_{\delta}$ has an exponential size.   %ale22oct2021: added some sentences to structure the presentation
  In fact, for every minimal set of pairs $I\subseteq \{1,\dots, N\}\times \{1, \dots, N\}$ such that the equivalence relation generated by $I$ contains the pair $(1,N)$, we have that $S_3$ contains the clause $\Gamma_I\to  z_0=z'_0$, where $\Gamma_I$ is the set of equalities $z_{ij}=z'_{ij}$ varying $(i,j)\in I$.\footnote{ 
  You can easily find exponentially many such $I$, e.g. by selecting a subset $X$ of $\{1,\dots, N\}$ containing both $1$ and $N$ and letting $I$ be the set 
  $\{(i,j)\mid i<j, i\in X, j\in X\}$.
  }
\end{example}
%\todo{More comments?}

\section{Correctness and Completeness Proofs}\label{sec:completeness}

In this section we prove correctness and completeness of our two algorithms. To this aim, we need some preliminaries, both from model theory and 
from term rewriting.

Extensions and UI are related to each other by the following result we take from~\cite{cade19}:%\todo{Should we report the proof? maybe not, even if it is short}

\begin{lemma}[Cover-by-Extensions]\label{lem:cover} 
%silvio feb 2021
Let $T$ be a first order theory.
A formula $\psi(\uy)$ is a UI in $T$ of $\exists \ue\, \phi(\ue, \uy)$ iff 
it satisfies the following two conditions:
\begin{description}
\item[{\rm (i)}] $T\models  \forall \uy\,( \exists \ue\,\phi(\ue, \uy) \to \psi(\uy))$;
\item[{\rm (ii)}] for every model $\cM$ of $T$, for every tuple of  elements $\ua$ from the support of $\cM$ such that $\cM\models \psi(\ua)$ it is possible to find
  another model $\cN$ of $T$ such that $\cM$ embeds into $\cN$ and $\cN\models \exists \ue \,\phi(\ue, \ua)$.
\end{description}
\end{lemma}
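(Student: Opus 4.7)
The plan is to prove both directions by combining the Robinson Diagram Lemma with the compactness theorem, exploiting the fact that embeddings preserve all quantifier-free formulas (both ways, since negations of literals are themselves literals).

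For the ``only if'' direction, assume $\psi(\uy)$ is a UI of $\exists \ue\, \phi(\ue, \uy)$. Condition (i) is immediate because $\psi \in Res(\exists \ue\,\phi(\ue,\uy))$. For (ii), I would argue by contradiction: suppose $\cM \models T + \psi(\ua)$ and no extension $\cN \supseteq \cM$ satisfies $\exists \ue\,\phi(\ue,\ua)$. By the Robinson Diagram Lemma, such an extension exists iff $T \cup \Delta_\Sigma(\cM) \cup \{\phi(\ue,\ua)\}$ is consistent (treating the $\ue$ as fresh constants). If this set is inconsistent, compactness yields a finite conjunction of ground $\Sigma^{|\cM|}$-literals $\theta(\ua, \ub)$ from $\Delta(\cM)$ such that $T \models \phi(\ue,\ua) \to \neg\theta(\ua,\ub)$. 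Treating the elements $\ub \setminus \ua$ as fresh free variables, the quantifier-free formula $\neg\theta(\ua,\ub)$ becomes a residue of $\exists \ue\,\phi(\ue,\ua)$, hence implied by $\psi(\ua)$ in $T$. But $\cM \models \psi(\ua) \wedge \theta(\ua,\ub)$ by construction, a contradiction.

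For the ``if'' direction, assume (i) and (ii). (i) tells us that $\psi \in Res(\exists \ue\,\phi(\ue,\uy))$. To show the maximality clause of the UI definition, take any quantifier-free residue $\theta(\uy, \uv)$, so that $T \models \phi(\ue,\uy) \to \theta(\uy,\uv)$; I need $T \models \psi(\uy) \to \theta(\uy,\uv)$. Fix a model $\cM \models T$ and an assignment with $\cM \models \psi(\ua)$, together with any interpretation $\uc$ of $\uv$. By (ii) there is $\cN \supseteq \cM$ with $\cN \models \exists \ue\,\phi(\ue,\ua)$. Pick a witness for $\ue$ in $\cN$; the residue property gives $\cN \models \theta(\ua,\uc)$. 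Since $\cM \hookrightarrow \cN$ is an embedding and $\theta$ is quantifier-free, truth of $\theta(\ua,\uc)$ transfers back to $\cM$ (literals and their negations are preserved by embeddings, and $\theta$ is a Boolean combination of literals). Hence $\cM \models \theta(\ua,\uc)$, as required.

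The main technical point to highlight is the two-way preservation of quantifier-free formulas under embeddings, which is what lets condition (ii) — formulated purely in terms of extensions and existential truth in $\cN$ — push quantifier-free information back down to $\cM$. Aside from this, the argument is a textbook application of the Robinson Diagram Lemma plus compactness, and I do not anticipate any genuine obstacle; the only place to be careful is keeping the role of the parameter variables $\uy$ separate from the auxiliary ``witness'' variables $\uv$ and $\ub$, so that the finite conjunction produced by compactness is correctly recast as a residue.
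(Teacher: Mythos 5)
The paper does not actually prove Lemma~\ref{lem:cover}; it is explicitly imported from~\cite{cade19} without proof, so there is no in-paper argument to compare against. That said, your proof is correct and is the standard argument one would expect to find in that reference: for the ``only if'' direction you reduce the extension problem to consistency of $T\cup\Delta_\Sigma(\cM)\cup\{\phi(\ue,\ua)\}$ via the Robinson Diagram Lemma, use compactness to isolate a finite conjunction $\theta$ of diagram literals witnessing inconsistency, abstract the constants $\ua,\ub$ (which are fresh names not occurring in $T$) to variables so that $\neg\theta$ becomes a residue, and then contradict $\cM\models\psi(\ua)\wedge\theta(\ua,\ub)$; for the ``if'' direction you push a residue $\theta$ up from $\cM$ to the extension $\cN$ where $\phi$ holds, and pull its truth back down using the fact that embeddings preserve and reflect quantifier-free formulas on parameters from $\cM$ (since the complement of a literal is again a literal). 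Both steps are sound, and the one place requiring care---keeping the parameter tuple $\uy$ separate from the auxiliary constants $\ub/\uv$ when recasting the finite diagram fragment as a residue---you flag explicitly and handle correctly. The only cosmetic remark is that when abstracting $\ua,\ub$ to fresh distinct variables, one should note that coincidences among the named elements of $|\cM|$ are harmless because the corresponding identifications are themselves diagram literals and can be folded into $\theta$; this does not affect the validity of the argument.
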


%silvio feb 2021 the definition of a Diagram is also in the preliminaries, so I removed from here
%To conveniently handle extensions, we need diagrams.
%Let $\cM$ be a $\Sigma$-structure. The \textit{diagram} of $\cM$~\cite{CK}, written $\Delta_{\Sigma}(\cM)$ (or just $\Delta(\cM)$), is the set of ground $\Sigma^{|\cM|}$-literals 
%that are true in $\cM$. 
%
%An easy but important result, called 
%\emph{Robinson Diagram Lemma}~\cite{CK},
%says that, given any $\Sigma$-structure $\cN$,  the embeddings $\mu: \cM \longrightarrow \cN$ are in bijective correspondence with
%expansions of $\cN$ 
%to   $\Sigma^{\vert \cM\vert}$-structures which are models of 
%$\Delta_{\Sigma}(\cM)$. The expansions and the embeddings are related in the obvious way: the name of $a$ is interpreted as $\mu(a)$.
%

%silvio feb 2021 next pragraph has been moved below
%For the purposes of this paper, i
%It is convenient to see $\Delta_{\Sigma}(\cM)$ as a set of flat literals as follows: the positive part of $\Delta_{\Sigma}(\cM)$ contains the $\Sigma^{\vert \cM\vert}$-equalities $f(a_1, \dots, a_n)=b$ which are true in $\cM$ and the negative part of $\Delta_{\Sigma}(\cM)$ contains the $\Sigma^{\vert \cM\vert}$-inequalities $a\neq b$, varying $a,b$ among the pairs of different elements of $\vert\cM\vert $.

For term rewriting we refer to a textbook like~\cite{BaNi98}; %\todo{Should we say more about TRS?}
we only recall 
the following  classical result:
\begin{lemma}\label{lem:canonic}
 Let $R$ be a canonical ground rewrite system over a  signature $\Sigma$. Then there is a $\Sigma$-structure $\cM$ such that for every pair of ground terms $t, u$ we have that $\cM\models t=u$ iff the $R$-normal form of $t$ is the same as the $R$-normal form of $u$. Consequently $R$ is consistent with a set of negative literals $S$ iff for every $t\neq u\in S$ the $R$-normal forms of $t$ and $u$ are different.
\end{lemma}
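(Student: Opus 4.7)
The plan is to build $\cM$ as the term structure of $R$-normal forms and then verify the two claims essentially by the Church--Rosser property. Let $T_\Sigma$ denote the set of ground $\Sigma$-terms and, for $t\in T_\Sigma$, write $t\!\downarrow$ for its (unique, by canonicity) $R$-normal form. I define $\cM$ by taking as support the set $\{t\in T_\Sigma\mid t=t\!\downarrow\}$ of normal forms, and by setting, for each $n$-ary function symbol $f\in\Sigma$ and normal forms $n_1,\dots,n_k$,
$$
f^{\cM}(n_1,\dots,n_k)\;:=\;f(n_1,\dots,n_k)\!\downarrow~.
$$
This is well-defined precisely because $R$ is canonical (terminating guarantees the normal form exists, confluence guarantees uniqueness).

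The first step is to show by structural induction on a ground term $t$ that its interpretation $t^{\cM}$ coincides with $t\!\downarrow$. The base case is immediate (a constant $c$ is interpreted as $c\!\downarrow$ by construction). For $t=f(t_1,\dots,t_k)$, by the induction hypothesis each $t_i^{\cM}=t_i\!\downarrow$, so
$$
t^{\cM}=f^{\cM}(t_1\!\downarrow,\dots,t_k\!\downarrow)=f(t_1\!\downarrow,\dots,t_k\!\downarrow)\!\downarrow~.
$$
Since the sequence of rewrites $t\to^*_R f(t_1\!\downarrow,\dots,t_k\!\downarrow)$ is valid, confluence yields $f(t_1\!\downarrow,\dots,t_k\!\downarrow)\!\downarrow=t\!\downarrow$. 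From this the first half of the lemma follows at once: $\cM\models t=u$ iff $t^{\cM}=u^{\cM}$ iff $t\!\downarrow=u\!\downarrow$.

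For the consistency claim, I first verify that $\cM$ is a model of the equations $R$: for any rule $\ell\to r\in R$ we have $\ell\!\downarrow=r\!\downarrow$ (both reduce to the same normal form), and thus $\cM\models \ell=r$ by what we just proved. Hence if $t\!\downarrow\neq u\!\downarrow$ for every $t\neq u\in S$, then $\cM\models S$ as well, so $R\cup S$ is consistent. Conversely, suppose $R\cup S$ is consistent and let $\cN$ be any model. Since $R$ derives $t=t\!\downarrow$ equationally for every ground $t$, we have $\cN\models t=t\!\downarrow$ and $\cN\models u=u\!\downarrow$ for every $t\neq u\in S$; thus a syntactic identity $t\!\downarrow=u\!\downarrow$ would force $\cN\models t=u$, contradicting $\cN\models t\neq u$.

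I do not foresee genuine obstacles here: the only delicate point is making sure the induction step uses confluence correctly, since it is precisely confluence (not mere termination) that allows us to identify $f(t_1\!\downarrow,\dots,t_k\!\downarrow)\!\downarrow$ with $t\!\downarrow$. Everything else is bookkeeping, and the whole argument stays within the ground fragment, so no substitution issues arise.
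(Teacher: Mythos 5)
Your proof is correct. Note that the paper does not actually supply a proof of this lemma: it simply cites it as a classical fact from Baader--Nipkow, so there is no ``paper argument'' to compare yours against. Your reconstruction is the standard one: take the algebra of $R$-normal forms with $f^{\cM}(n_1,\dots,n_k)=f(n_1,\dots,n_k)\!\downarrow$, prove $t^{\cM}=t\!\downarrow$ by structural induction (correctly invoking confluence, not just termination, to identify $f(t_1\!\downarrow,\dots,t_k\!\downarrow)\!\downarrow$ with $t\!\downarrow$), and then read off both directions of the consistency claim, the forward one by observing $\cM\models R$ and the converse by noting that any model of $R$ must equate each term with its normal form. The only thing worth flagging explicitly, though it is harmless in the paper's setting, is that the support $\{t\in T_\Sigma\mid t=t\!\downarrow\}$ is nonempty only if $\Sigma$ has at least one constant; in the applications here $R$ always arises over a signature enriched with constants (names of elements, skolemized variables), so this edge case never bites.
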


%\begin{proof} It is sufficient to take the set of $\Sigma$-terms in $R$-normal form as support of $\cM$ and to interpret an $n$-ary function symbol $f$ as the function mapping a tuple of $R$-normal forms $t_1, \dots, t_n$ to the $R$-normal form of $f(t_1, \dots, t_n)$. Then an easy induction shows that every ground term is interpreted as its own $R$-normal form, so that the claim follows immediately.
%\end{proof}

%alefeb2021: added intutions
We are now ready to prove  correctness and completeness of our  algorithms.
We first give the relevant intuitions for the proof technique, which is the same for both cases. By Lemma~\ref{lem:cover} characterizing uniform interpolants, what we need  to show is that if a model $\cM$ satisfies the output formula of the algorithm, then it can be extended to a superstructure $\cN$ %$\cM'$ %ale4oct 
satisfying the input formula of the algorithm. By Robinson Diagram Lemma, this embeddability problem can be transformed into a consistency problem: in order to do so, we show that the Robinson Diagram $\Delta_\Sigma(\cM)$ is consistent with the 
%silvio feb 2021 
%it is a bit confusing what is written here, bacause we referred to the terminakl status of the algorithm (before erasing the part containing the symbols to be eliminated) as 'the output' of the algorithm. However the output is the formula obtained ``after'' erasing the symbols to be eliminated
%output 
input formula of the algorithm.
%silvio feb 2021 I removed the paragraph below because of the above reasons
%The output formula is equivalent to  a disjunction of constraints and the diagram $\Delta(\cM)$  is also a constraint (albeit infinitary).
%
%silvio feb 2021 restored the above paragraph here 
For our purposes, it is convenient to see $\Delta_{\Sigma}(\cM)$ as a set of flat literals as follows: the positive part of $\Delta_{\Sigma}(\cM)$ contains the $\Sigma^{\vert \cM\vert}$-equalities $f(a_1, \dots, a_n)=b$ which are true in $\cM$ and the negative part of $\Delta_{\Sigma}(\cM)$ contains the $\Sigma^{\vert \cM\vert}$-inequalities $a\neq b$, varying $a,b$ among the pairs of different elements of $\vert\cM\vert $.
The positive part of $\Delta_\Sigma(\cM)$ is a canonical rewriting system (equalities 
like $f(a_1, \dots, a_n)=a$ are obviously oriented from left-to-right) and every term occurring in $\Delta(\cM)$ is in normal form. If an algorithm works properly, %alefinal: I modified the previous words that were 'does a good job'
 it will be possible to see that the completion of the union of  $\Delta_\Sigma(\cM)$ with 
 %silvio feb 2021
 the input constraint (or with a constraint 
 equivalent to it)
 %the relevant disjunct constraint 
 is trivial and does not produce inconsistencies. To sum up, the completeness proofs of both algorithms require the following technical ingredients: %ale22oct2021: added the last 'sum up' as deepak requested
 \begin{compactenum}
 \item Lemma~\ref{lem:cover},  for transforming the problem of computing UI into an embeddability problem;
 \item Robinson Diagram Lemma, for turning the previous problem into a consistency problem, which is more tractable;
 \item the completion of the diagram 
 %silvio22oct added next line
 joined with the input constraint,
 so as to get a canonical rewriting system.
 \end{compactenum}

\vspace{2mm}

\noindent\textbf{Correctness and Completeness  of the Tableaux Algorithm}
\vspace{1mm}

In this subsection, we prove the correctness and completeness of the Tableaux Algorithm. 
%aleFeb2021: added roadmap for the correcteness proof of the Tableaux Algorithm
We first summarize the structure of the proof by commenting on its main steps. As discussed above, the proof of Theorem~\ref{thm:tab-cover} relies on Lemma~\ref{lem:cover}: in order to prove that the output formula is a uniform interpolant, it is sufficient to show that the embeddability conditions stated in Lemma~\ref{lem:cover} hold. This is achieved, thanks to Robinson Diagram Lemma, by showing that  the Robinson Diagram $\Delta_\Sigma(\cM)$, where $\cM$ satisfies the output formula, is consistent with the 
%silvio feb 2021 output 
input formula, as manipulated up to logical equivalencies by the algorithm. In the case of the Tableaux Algorithm, after a normalization of 
%silvio feb 2021
a rewriting system suitably extending
this Diagram, we show that all the obtained oriented equalities form a canonical rewriting system: this is an immediate consequence of Remark~\ref{rmk:terminal}. We then conclude applying Lemma~\ref{lem:canonic}:
%aleSEPT2021: added the following sentence for explaining better the use of Lemma 5.2
 this lemma allows us to exhibit a model of the canonical rewriting system, showing in turn the consistency of $\Delta_\Sigma(\cM)$ with the input formula.

\begin{theorem}\label{thm:tab-cover} 
Suppose that we apply the  algorithm of Subsection~\ref{subsec:tabalg} to the
 primitive formula 
 $\exists \ue(\phi(\ue, \uz))$ 
 and that the algorithm terminates with its branches 
 %have the dag-primitive format~~\eqref{eq:primitive}; if 
 %the above algorithm, initiated with~\eqref{eq:primitive}, terminate  its branches  
 in the states 
$$
\delta_1(\uy_1,\uz)\wedge  \Phi_1(\uy_1,\uz) \wedge \Psi_1(\ue_1, \uy_1, \uz),
~~~\dots,~~~ 
\delta_k(\uy_k,\uz)\wedge  \Phi_k(\uy_k,\uz) \wedge  \Psi_k(\ue_k, \uy_k, \uz)
$$
then the UI of $\exists \ue(\phi(\ue, \uz))$ in \EUF is the \emph{unravelling} 
%silviofinal added a pointe4r to the definition
(see Subsection~\ref{sec:CC}) of the formula  %ale: I deleted DAG from DAG-unravelling since it seems something that needs to be defined
\begin{equation}
 \bigvee_{i=1}^k \exists \uy_i ~(\delta_i(\uy_i, \uz) \wedge \Phi_i(\uy_i, \uz))~~.
\end{equation}
\end{theorem}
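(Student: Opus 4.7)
The plan is to invoke Lemma~\ref{lem:cover}. Writing $\psi(\uz):=\bigvee_{i=1}^k \exists \uy_i\,(\delta_i \wedge \Phi_i)$, I must verify the two conditions of that lemma, for $T=\EUF$.

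For condition (i), I would first argue, by case analysis on the rules of Subsection~\ref{subsec:tabalg}, that every transformation preserves the logical meaning of the DAG-primitive formula $\exists \uy \exists \ue\,(\delta \wedge \Phi \wedge \Psi)$. Rules (1)--(3) are meaning-preserving on the nose, while Rule (4) is handled by noting that its two alternatives (4.0) and (4.1) cover the classical case split on whether all the equalities in the difference set hold or some disequality in that set does. Consequently, the disjunction $\bigvee_{i=1}^k \exists \uy_i \exists \ue_i\,(\delta_i \wedge \Phi_i \wedge \Psi_i)$ is $\EUF$-equivalent to $\exists \ue\,\phi$, and dropping the $\Psi_i$'s (a weakening) yields the implication from $\exists \ue\,\phi$ to $\psi$.

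For condition (ii), I would start with an arbitrary model $\cM$ of $\EUF$ and a tuple $\ua$ from $|\cM|$ with $\cM \models \psi(\ua)$; then some disjunct is witnessed in $\cM$ by a tuple $\ub$, say the $i$-th, so that $\cM \models \delta_i(\ub, \ua) \wedge \Phi_i(\ub, \ua)$. My task reduces, via condition~(i), to producing an extension $\cN \supseteq \cM$ and an interpretation of the $\ue_i$ in $\cN$ validating $\Psi_i[\ub/\uy_i]$; by the Robinson Diagram Lemma, this amounts to proving the consistency of
$$
\Delta_\Sigma(\cM) \;\cup\; \Psi_i[\ub/\uy_i]
$$
where the $\ue_i$ are now skolemized to fresh constants.

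The main obstacle, and the core of the argument, is establishing this consistency, which I would address by building a canonical ground rewrite system. Orient every positive flat equality $f(\dots)=c$ from both $\Delta_\Sigma(\cM)$ and $\Psi_i[\ub/\uy_i]$ left-to-right; termination is immediate since right-hand sides are constants. For local confluence I would invoke Remark~\ref{rmk:terminal}: every positive equality surviving in a terminal $\Psi_i$ has the form $f(a_1,\dots,a_n)=a$ with at least one $a_j$ a skolemized $\ue$-constant, so no critical pair can form between a $\Psi_i$-rule and a $\Delta_\Sigma(\cM)$-rule (whose arguments are exclusively $|\cM|$-elements, disjoint from the fresh $\ue$-constants). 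For two $\Psi_i$-rules sharing a head symbol, Remark~\ref{rmk:terminal} ensures that either they are incompatible (their left-hand sides differ at some $\ue$-position), or a disequality from their difference set has been recorded in $\Phi_i$ and is therefore true in $\cM$ at $\ub,\ua$, again making the two left-hand sides instantiate to distinct ground terms. The resulting system is canonical, so Lemma~\ref{lem:canonic} delivers the desired $\cN$; finally, the disequalities from $\Delta_\Sigma(\cM)$ survive because distinct $|\cM|$-elements have distinct normal forms, and each $e_j\neq a$ in $\Psi_i$ survives because the fresh constant $e_j$ never appears on any right-hand side, so its normal form is itself and cannot coincide with that of $a$.
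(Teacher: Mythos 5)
Your proposal follows essentially the same path as the paper's proof: reduce via Lemma~\ref{lem:cover} to an embeddability condition, turn that into a consistency problem through Robinson's Diagram Lemma, establish consistency by exhibiting a canonical ground rewrite system using Remark~\ref{rmk:terminal}, and conclude with Lemma~\ref{lem:canonic}. One small slip in the final sentence: you justify that the fresh constant $e_j$ is in normal form by saying it never appears on any \emph{right}-hand side, but in fact $e_j$ may well be a right-hand side (a terminal $\Psi_i$ may contain $f(a_1,\dots,a_n)=e_j$); the correct justification is that $e_j$ never occurs as the \emph{left}-hand side of any oriented rule, hence is irreducible.
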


\begin{proof} Since $\exists \ue(\phi(\ue, \uz))$ is logically equivalent to $\bigvee_{i=1}^k \exists \uy_i ~(\delta_i(\uy_i, \uz) \wedge \Phi_i(\uy_i, \uz)\wedge \exists \ue_i \Psi_i(\ue_1, \uy_1, \uz))$, it is sufficient to check that if a formula like~\eqref{eq:primitive} is terminal (i.e. no rule applies to it) then its UI is $\exists \uy ~(\delta(\uy, \uz) \wedge \Phi(\uy, \uz))$.
To this aim, we apply Lemma~\ref{lem:cover}: we pick a model  $\cM$ satisfying the formula $\delta(\uy, \uz) \wedge \Phi(\uy, \uz)$ from the output
%aleSEPT2021: added 'from the output'
 via an assignment $\cI$ to the variables $\uy, \uz$\footnote{Actually the values of the  assignment $\cI$ to the $\uz$ uniquely determines the values of  $\cI$ to the $\uy$.} and we show that $\cM$ can be embedded into a model $\cM'$ such that, for a suitable extensions $\cI'$ of $\cI$ to the variables $\ue$, we have that $(\cM', \cI')$ satisfies also $\Psi(\ue, \uy, \uz)$ from the input.
 %aleSEPT2021: added 'from the input'
% %aleSEPT2021: added the following  sentence for connecting the proof to the intuitive explanation
This embeddability problem can be transformed into a consistency problem as follows. In fact,
what we need (by Robinson Diagram Lemma) is to find a model for the following set of literals 
\begin{equation}\label{eq:TR}
\Delta_\Sigma(\cM)\cup \Psi  \cup \{  a= \tilde a\}_{a\in \uy\cup \uz}
\end{equation}
where $\tilde a$ is the value of $a$ under the assignment $\cI$ (here all variables in~\eqref{eq:TR} are seen as constants, so~\eqref{eq:TR} is a set of ground literals).  
We can orient the equalities in~\eqref{eq:TR} by letting function symbols having bigger precedence 
over constants and by letting $a$ having bigger precedence over $\tilde a$. Normalizing~\eqref{eq:TR} replaces $a$ with $\tilde a$ in $\Psi$: call $\tilde \Psi$ the resulting set of literals 
%silvio feb 21 moved the remark in bracket here
(we conventionally use $\tilde e_i$  as an alias for $e_i$, for all $e_i\in \ue$, to have a uniform notation). %Now %aleSEPT2021: rephrased the following sentence
After this normalization, we show that all oriented equalities in $\Delta_\Sigma(\cM)\cup \tilde\Psi  \cup \{  a= \tilde a\}_{a\in \uy\cup \uz}$ form a canonical rewriting system. This is due to Remark~\ref{rmk:terminal}: in fact  if $f(a_1, \dots, a_n)$ and $f(b_1, \dots, b_n)$ both occur in $\Psi$, it cannot happen that $f(\tilde a_1, \dots, \tilde a_n)$ and $f(\tilde b_1, \dots, \tilde b_n)$
are the same term because the $\ue$ are already in normal form and because $\cM\models \Phi$ (and hence also 
$\cM\models \tilde \Phi$): 
%silvio feb 21 added sentence to explain
in particular, all disequalities between $\ue$-free constants $a_i\neq b_i$ belonging to $\Phi$ are true in $\cM$.
In addition, if $f(a_1, \dots, a_n)$ occurs in $\Psi$, then one of the $a_i$ belongs to $\ue$, hence rules from $\tilde \Psi$ and $\Delta_\Sigma(\cM)$ cannot superpose.
Since all oriented equalities in $\Delta_\Sigma(\cM)\cup \tilde\Psi \cup \{  a= \tilde a\}_{a\in \uy\cup \uz}$ form a canonical rewriting system, the inequalities in $\Delta_\Sigma(\cM)\cup \tilde\Psi$ are in normal form and  we can apply Lemma~\ref{lem:canonic} to get the desired $\cM'$:  in fact, Lemma~\ref{lem:canonic} provides a $\Sigma$-structure $\cM'$ that  is a model of the canonical rewriting system $\Delta_\Sigma(\cM)\cup \tilde\Psi \cup \{  a= \tilde a\}_{a\in \uy\cup \uz}$, showing in turn the consistency of $\Delta_\Sigma(\cM)$ with the formula $\Psi$ in input. %aleSEPT2021: added the previous  sentence for explaining better the use of Lemma 5.2
\end{proof}

\noindent\textbf{Correctness and Completeness  of the Conditional Algorithm}
\vspace{1mm}
%To justify
%In this Section, we sketch the proof of  correctness and  completeness of 
%the algorithm of Subsection~\ref{subsec:condalg}. %, we
%first need a  Lemma (proved in the Appendix~\ref{app}):
%which is rather obvious from the construction of the set $S_3$ in Step 2:
%
%\begin{lemma}\label{lem:pers}
% If the clauses $\Gamma \to f(a_1, \dots, a_h)= b$ and $\Gamma'\to f(a'_1, \dots, a'_h)= b'$ both belong to $S_3$
 %silvio 16/2
 %I added next line; in fact without it, for the Lemma to be correct we should produce trivial clauses like a_1=a_1, \dots, a_h=a_h \to b=b from a single atom f(a_1, \dots, a_h)= b$ in the input - which of course we do not want to do!!! The actual formulation of the lemma is obviously sufficient for the completeness proof
 %
% and $b$ is not the same term as $b'$, 
% then $S_3$ contains also a clause subsuming the clause
%$
% \Gamma, \Gamma', a_1=a'_1, \dots, a_h=a'_h \to b=b'
%$
%\end{lemma}

In this subsection we provide the full proof of correctness and completeness of the Conditional Algorithm. 
%aleFeb2021: added the roadmap for the correctness theorem of Conditional Algorithm
First of all, we briefly present the main ideas behind this proof.
As in the case of the Tableaux algorithm, exploiting Lemma~\ref{lem:cover}, we need to show that the embeddability conditions stated in Lemma~\ref{lem:cover} hold. We do so by using Robinson Diagram Lemma: we prove that  the Robinson Diagram $\Delta(\cM)$, where $\cM$ satisfies the output formula, is consistent with the
%silvio feb 21 output 
input formula. However, in the case of the Conditional Algorithm, the proof of this fact is more involved: indeed, we use the ground Knuth-Bendix completion in order to prove that no inconsistent literal can be produced, and this requires a careful analysis of the equalities that can be generated during the completion. Specifically, a particular attention is needed for equalities involving only symbols from 
%silvio feb 21
a certain subset
$\ue\setminus \uw$ (called $\uu$ in the proof of the theorem): this analysis is carried out in Lemma~\ref{lem:fact} below. The fact that no inconsistency can be produced concludes the proof of the theorem.

In order to prove Lemma~\ref{lem:fact} (used in the proof of Theorem~\ref{thm:main}), we need to show the following preliminary lemma:

\begin{lemma}\label{lem:pers}
If the clauses $\Gamma \to f(a_1, \dots, a_h)= b$ and $\Gamma'\to f(a'_1, \dots, a'_h)= b'$ both belong to the set of clauses $S_3$ obtained after Step 2 in Subsection~\ref{subsec:condalg} %aleFeb2021: added ref to S3
 %silvio 16/2
 %I added next line; in fact without it, for the Lemma to be correct we should produce trivial clauses like a_1=a_1, \dots, a_h=a_h \to b=b from a single atom f(a_1, \dots, a_h)= b$ in the input - which of course we do not want to do!!! The actual formulation of the lemma is obviously sufficient for the completeness proof
 %
 and $b$ is not the same term as $b'$, 
 then $S_3$ contains also a clause subsuming the clause
 \begin{equation*}
 \Gamma, \Gamma', a_1=a'_1, \dots, a_h=a'_h \to b=b'
\end{equation*}
\end{lemma}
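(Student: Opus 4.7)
My plan is to prove the lemma by induction on the sum of the derivation lengths (from the literals of $S_1$, via the rewriting rule of Step 2) of the two premise clauses. The base case is completely handled by Step 1: when both clauses come directly from $S_1$ and thus have empty antecedents and distinct right-hand sides, the very pair $f(a_1,\dots,a_h)=b$, $f(a'_1,\dots,a'_h)=b'$ triggers Step 1, which adds the clause $a_1=a'_1, \dots, a_h=a'_h \to b=b'$ to $S_2 \subseteq S_3$; this clause trivially subsumes itself.

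For the inductive step, I would pick the clause with the longer (nonzero) derivation — say the first one — and undo its last rewrite. That rewrite used a conditional definition $\Gamma_0 \to e_j = e_i$ in $S_3$ together with a shorter-derivation clause $\Gamma'' \to f(a''_1,\dots,a''_h)=b''$ in $S_3$, replacing one occurrence of $e_j$ at a position $p$ by $e_i$; so $\Gamma = \Gamma_0 \cup \Gamma''$. The argument then splits according to whether $p$ lies inside some argument slot $a''_s$ (so $a''_s=e_j$, $a_s=e_i$, $a_r=a''_r$ for $r\neq s$, and $b=b''$) or inside the right-hand side (so $b''=e_j$, $b=e_i$, and $a_r=a''_r$ for all $r$).

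In the argument case, $b''=b\neq b'$, so the induction hypothesis yields a clause $D \in S_3$ subsuming $\Gamma'',\Gamma',a''_1=a'_1,\dots,a''_h=a'_h \to b=b'$. A further application of the rewriting rule, feeding it $\Gamma_0 \to e_j = e_i$ and $D$ and rewriting the $e_j$ that occurs in the antecedent atom $a''_s=a'_s$, produces a clause in $S_3$ whose antecedent is contained in $\Gamma \cup \Gamma' \cup \{a_1=a'_1, \dots, a_h=a'_h\}$ and whose head is still $b=b'$; if $D$ does not actually contain that atom, then $D$ itself already subsumes what is wanted. The right-hand-side case with $b'' \ne b'$ is symmetric: the rewrite is now performed on the head of $D$ instead of its antecedent.

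The genuine obstacle is the degenerate sub-case in which the last rewrite has made $b''$ and $b'$ syntactically equal, so the induction hypothesis does not apply. Here $b' = e_j$ and $b = e_i$ with $i \neq j$ (by the assumption $b \neq b'$), and I would close the case by observing that the conditional definition $\Gamma_0 \to e_j = e_i$ consumed by the last rewrite is itself already a clause of $S_3$ whose antecedent $\Gamma_0$ is a subset of $\Gamma$ and whose head $e_j = e_i$ coincides with $b = b'$ up to symmetry of equality; it therefore subsumes the target clause directly. The remaining bookkeeping — how literals merge in the antecedent when the rewriting rule is reapplied, and the fact that subsumption for ground Horn clauses reduces to inclusion of antecedents with matching (or absent) heads — is routine and can be checked case by case.
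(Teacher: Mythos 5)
Your overall strategy — induction on the number of Step~2 rewrites, base case from Step~1, undoing the last rewrite and re-applying it to the clause given by the induction hypothesis — is the same as the paper's, and your handling of the degenerate sub-case (falling back on the conditional definition $\Gamma_0\to e_j=e_i$ itself when the induction hypothesis is not applicable because the right-hand sides coincide) matches the paper's footnote.

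There is, however, a missing case. When you undo the last rewrite of the first clause, you write ``so $\Gamma=\Gamma_0\cup\Gamma''$'' and then split only on whether $p$ lies in an argument slot $a''_s$ or in the right-hand side $b''$ of the consequent. But the rewriting rule of Step~2 allows $p$ to be any position in the clause $C$ where $e_j$ occurs, \emph{including} a position inside the antecedent $\Gamma''$. In that case the antecedent, not the consequent, is what changes, so $\Gamma=\Gamma_0\cup\Gamma''[e_i]_p$ rather than $\Gamma_0\cup\Gamma''$, while $a_r=a''_r$ for all $r$ and $b=b''$. Your identity ``$\Gamma=\Gamma_0\cup\Gamma''$'' is therefore false in this case, and the case is silently dropped from your analysis. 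It is not vacuous: antecedents can and do accumulate $e$-variables via Step~1, and the paper treats exactly this situation as its representative case (apply the induction hypothesis to obtain a clause subsuming $\Gamma,\Gamma_2,a_1=a'_1,\dots,a_h=a'_h\to b=b'$, then rewrite the antecedent atom at position $p$ using $\Gamma_1\to e_i=e_j$ to recover $\Gamma_2[e_i]_p$ and add $\Gamma_1$). The repair is structurally identical to your ``argument slot'' case, so nothing in your plan fails, but the case must be added and the claim about $\Gamma$ corrected to a case distinction.
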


\begin{proof}
 By induction on the number $K$ of applications of the  rewriting rule of Step 2 needed to derive $\Gamma \to f(a_1, \dots, a_h)= b$ and $\Gamma'\to f(a'_1, \dots, a'_h)= b'$. If $K$ is 0, the claim is clear by the instruction of Step 1. Suppose that $K>0$ and let $\Gamma'\to f(a'_1, \dots, a'_h)= b'$ be obtained from $\Gamma_1\to e_i= e_j$ by rewriting $e_j$ to $e_i$ from some clause $C$. We need to distinguish cases depending on the position $p$ of the rewriting. All cases being treated in the same way, suppose for instance that $p$ is in the antecedent,\footnote{ 
 %silvioLAST very small rewriting
 There is a case, where $p$ is in the consequent,  that is treated in a slightly different way than all the other ones.
 %There is another  case, where $p$ is in the consequent (this is the only case that must be treated in a slightly different way).
 %silvio feb 21 rewritten in a more explict way
 If, using $\Gamma_1\to e_i= e_j$, we rewrite a clause of the form $\Gamma''\to f(a'_1, \dots, a'_n)= e_j$ into
 $\Gamma_1,\Gamma''\to f(a'_1, \dots, a'_n)= e_i$ (so that $b'$ is the same as $e_i$) and if $b$ is $e_j$, 
 then, instead of applying induction, we can directly take $\Gamma_1\to e_i= e_j$ as the clause we are looking for. If $b$ is not $e_j$, induction applies as in all the other cases.
% If $b'$ is  the same term as $b$ and we rewrite it, then $b'$ is $e_j$ and $\Gamma_1\to e_i= e_j$ is the clause we are looking for.
 }  so that $C$ is $\Gamma_2\to f(a'_1, \dots, a'_h)= b'$ and that $\Gamma'\to f(a'_1, \dots, a'_h)= b'$ is $\Gamma_1,\Gamma_2[e_i]_p\to f(a'_1, \dots, a'_h)= b'$. Then by induction hypothesis $S_3$ contains a clause subsuming 
 $$
 \Gamma,  \Gamma_2, a_1=a'_1, \dots, a_h=a'_h \to b=b'
 $$
 and rewriting with $\Gamma_1\to e_i= e_j$ produces 
 $$
 \Gamma, \Gamma_1,  \Gamma_2[e_i]_p, a_1=a'_1, \dots, a_h=a'_h \to b=b'
 $$
 as required.
\end{proof}

We now state and prove the theorem of correctness and completeness of the Conditional Algorithm.

\begin{theorem}\label{thm:main}
 Let $S_3$  be obtained from  $\exists \ue\, \phi(\ue, \uz)$ as in Steps 1-2 of Subsection~\ref{subsec:condalg}. %aleFeb2021: added ref to Subsection
 Then the conjunction $\mathcal{C}$ of all possible $ \phi_{\delta}$ (varying $\delta$ among the  conditional DAGs that can be built out of $S_3$) is a UI of $\exists \ue\, \phi(\ue, \uz)$ in \EUF. 
\end{theorem}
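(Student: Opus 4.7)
\medskip
\noindent\textbf{Proof plan.}
The plan is to invoke the Cover-by-Extensions characterization (Lemma~\ref{lem:cover}) and verify its two conditions. Condition (i) is the easy direction: Step~1 only adds Horn clauses that are semantic consequences of $S_1$ in \EUF (they encode functionality of the $f_i$), and Step~2 is sound rewriting, hence $\bigwedge S_1 \vdash_{\EUF} \bigwedge S_3$. Each $\phi_\delta$ is an $L_0$-quantifier-free formula (so it does not mention $\ue$) and is implied by $\bigwedge S_3$ by Lemma~\ref{lem:oneside}; therefore $\exists \ue\,\phi(\ue,\uz) \vdash_{\EUF} \mathcal{C}(\uz)$, which is precisely condition (i).

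For condition (ii) the plan is to fix a model $\cM$ of \EUF and a tuple $\ua$ from $|\cM|$ with $\cM \models \mathcal{C}(\ua)$ and to exhibit an extension $\cN \supseteq \cM$ with $\cN \models \exists \ue\,\phi(\ue,\ua)$. By the Robinson Diagram Lemma, this reduces to showing that the ground set of literals
$$
\Delta_\Sigma(\cM)\;\cup\; S_1[\uz/\ua]
$$
is consistent, the variables $\ue$ being treated as fresh constants. The positive part of the diagram is already a canonical ground rewrite system over $|\cM|$; I would then run ground Knuth--Bendix completion on the union, using a precedence in which the $\ue$ are larger than all elements of $|\cM|$ (so the $|\cM|$-rewrite rules are not disturbed), and appeal to Lemma~\ref{lem:canonic}: consistency amounts to checking that completion neither identifies two distinct $|\cM|$-constants nor forces an equality contradicting a disequality of $S_1$.

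The heart of the argument is to track how new equalities are produced during completion. Every critical pair between a left-hand side $f(a_1,\dots,a_h)$ coming from $S_1$ and another such term (or a term inherited from $\Delta_\Sigma(\cM)$) is exactly the pattern covered by Step~1, so Lemma~\ref{lem:pers} guarantees that the resulting conditional equality is already (subsumed by) a clause of $S_3$. At any stage of completion some subset $\uw\subseteq \ue$ has acquired a definition in terms of strictly smaller symbols (earlier $\uw$-symbols and $|\cM|$-constants); writing $\uw = w_1,\dots,w_s$ in the order in which these definitions were produced and reading off the side conditions $\Gamma_i$ that were needed to produce them, one obtains a $\uw$-conditional DAG $\delta$ built out of $S_3$, whose side conditions $\Gamma_i$ are by construction satisfied by $\cM$ under $\ua$. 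By the assumption $\cM \models \phi_\delta(\ua)$, and by the definition of $\phi^{s+1}_\delta$, after substituting the $\uw$ by their definitions, \emph{all} $L_{\uw}$-clauses of $S_3$ hold in $\cM$; hence no further completion step can equate two distinct $|\cM|$-constants or violate a disequality of $S_1$.

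The main obstacle I anticipate is the bookkeeping needed for this last step, namely showing that the completion of the residual ``purely $\uu$-part'' (where $\uu := \ue\setminus \uw$) cannot produce any equality that would contradict $\Delta_\Sigma(\cM)$ or a disequality of $S_1$. This is exactly where a technical lemma (the announced Lemma~\ref{lem:fact}) is needed: it should state that any derivable equality between $\uu$-terms, or between a $\uu$-term and an $|\cM|$-element, is either trivial or corresponds to an $L_{\uw}$-clause of $S_3$ whose consequent is forced by $\phi_\delta$; hence, under the standing hypothesis $\cM \models \phi_\delta(\ua)$, no new inconsistency can arise. Lemma~\ref{lem:pers} is the inductive backbone of this analysis, handling the case in which both sides of a critical pair are built with the same head symbol. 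Once Lemma~\ref{lem:fact} is in place, completion terminates (by standard ground Knuth--Bendix termination arguments on the finite signature induced by the involved constants) without producing inconsistency, Lemma~\ref{lem:canonic} yields the desired model, and Robinson's Diagram Lemma converts it into the extension $\cN$ required by Lemma~\ref{lem:cover}(ii). This completes the proof plan.
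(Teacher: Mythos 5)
Your plan lines up closely with the paper's proof in its high-level structure: both reduce to the two conditions of Lemma~\ref{lem:cover}, prove~(i) via Lemma~\ref{lem:oneside}, and attack~(ii) by turning the embeddability statement into a consistency problem via the Robinson Diagram Lemma and then running ground Knuth--Bendix completion on $\Delta_\Sigma(\cM)$ joined with the (suitably normalized) input literals, with Lemma~\ref{lem:pers} and a Lemma~\ref{lem:fact}-type statement controlling what the critical pairs can produce. So the ingredients are the right ones.

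The genuine gap is in how you choose the conditional DAG~$\delta$. You propose to build $\uw$ \emph{dynamically}, adding a variable to $\uw$ each time completion happens to derive a $\ue$-free definition for it, and then reading off a $\delta$ at the end. The paper instead fixes, \emph{before} running completion, a $\uw$-conditional DAG $\delta$ that is realized in $\cM$ under the assignment $\uz\mapsto\tilde\ua$ and is \emph{maximal} (with respect to inclusion of the underlying subset $\uw\subseteq\ue$) among all such realized conditional DAGs; the residual variables are $\uu:=\ue\setminus\uw$. This maximality is not a cosmetic choice --- it is the load-bearing device in the proof of Lemma~\ref{lem:fact}. There, when a superposition of two rules of kind~\eqref{eq:11} produces an equality $a=b$ with both $a',a''$ tracked back to $S_3$, maximality is what rules out the case where exactly one of $a',a''$ lies in $\uz\cup\uw$ and the other in $\uu$: such a situation would yield an additional conditional definition that could be adjoined to $\delta$, contradicting maximality. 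Consequently, the only nontrivial equalities that completion of $\tilde S_\uu$ can produce are of the form $u_i=u_j$ (kind~\eqref{eq:31}), never $u_i=c$ for an $|\cM|$-constant $c$, and this is exactly what makes the final inconsistency check go through (only $u_i\neq u_i$ needs to be excluded, and that is handled by $\cM\models\phi_\delta$).

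In your dynamic construction you lose this control: \emph{a priori} nothing prevents completion from deriving $u_i=c$ with $c\in|\cM|$ at some intermediate stage, which would enlarge your $\uw$ and change the $\delta$ you are reasoning about mid-argument; you then need a separate stabilization argument (that the process reaches a fixed $\uw$), plus a check that the ``read-off'' $\delta$ really satisfies the $L_{i-1}$-restriction on the antecedents $\Gamma_i$ imposed by the definition of a conditional DAG (your accumulated side conditions might mention later $w_j$'s or $\uu$-symbols). Your sentence ``hence no further completion step can equate two distinct $|\cM|$-constants or violate a disequality of $S_1$'' is asserted rather than proved, and it is precisely the claim that the paper establishes via Lemma~\ref{lem:fact} using the maximality of the pre-chosen $\delta$. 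To close the gap, either adopt the paper's move --- choose the maximal realized $\delta$ up front --- or supply the missing stabilization and well-formedness arguments for the dynamically constructed $\delta$.
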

%aleSEPT2021: added a name for the conjunction of phis

\begin{proof}
 We use Lemma~\ref{lem:cover} in order to show that the output $\mathcal{C}$ is the UI of the input formula $\exists \ue\, \phi(\ue, \uz)$. %aleSEPT2021: rephrased previous sentence
 Condition (i) of that Lemma is ensured by Lemma~\ref{lem:oneside} above because $\bigwedge S_3$ is logically equivalent to $\phi$. So let us take a model $\cM$ and elements $\tilde \ua$ from its support such that we have $\cM\models \bigwedge_{\delta} \phi_{\delta}$ under the assignment of the $\tilde \ua$ to the parameters $\uz$. We need to expand it to a superstructure $\cN$ in such a way that we have $\cN \models \bigwedge S_1$, under some assignment to $\uz, \ue$ extending the assignment $\uz\mapsto \tilde \ua$
 (recall that $\bigwedge S_1$ is logically equivalent to $\phi$ too).
 From now on, we consider the assignment $\uz\mapsto \tilde \ua$ fixed, so that when we write $\cM\models C$ for a clause 
 %silvio 16/2 $C(\uy)$ 
 $C(\uz)$ 
 we mean that $\cM\models C$ holds under the assignment $\uz\mapsto \tilde \ua$.
 
 %% %aleSEPT2021: added the following  sentence for connecting the proof to the intuitive explanation
 
 Now, we can transform the embeddability problem of finding the aforementioned superstructure $\cN$ into a consistency problem as follows.
 First of all,  notice that every $\uw$-conditional DAG $\delta$ extracted from $S_3$ (let it be given by the clauses~\eqref{eq:cDAG})
 is naturally equipped with a substitution $\sigma_{\delta}$ which is given in DAG form by
$
w_1 \mapsto t_1, ~\dots,~  w_s \mapsto t_s.
$
 We say that $\delta$ is \emph{realized} in $\cM$ iff 
 we have that
 $$
 \cM \models \bigwedge \Gamma_1\sigma_{\delta}, \dots, \cM \models \bigwedge \Gamma_s\sigma_{\delta}
 $$
 Let $\delta$ be a $\uw$-conditional DAG which is realized in $\cM$  and let it be maximal %such
 with this property
  (a $\uw$-conditional DAG $\delta$ is said to be bigger than a $\uw'$-conditional DAG $\delta'$ iff $\uw$ includes  $\uw'$ - the inclusion is as sets, the order is disregarded). Since $\cM\models \phi_{\delta}$ and $\delta$ is realized in $\cM$, it is clear that all $L_{\uw}$-clauses from $S_3$ (hence also  all $L_{\uw}$-literals from $S_1$) are true in $\cM$. Let $\uu=u_1, \dots, u_k$ be the variables from $\ue\setminus \uw$ and let $S_{\uu}$ be the literals from $S_1$ which are not $L_{\uw}$-literals. 
  What we need (by Robinson Diagram Lemma) is to find a model for the following set of literals 
\begin{equation}\label{eq:R}
\Delta_\Sigma(\cM)\cup S_{\uu}\cup \{  w_i= \tilde b_i\mid i=1, \dots, s\} \cup \{  z_i= \tilde a_i\mid i=1, \dots, M\}
\end{equation}
where $\tilde b_i$ is the value of $w_i\sigma_{\delta}$  under the assignment $\uz\mapsto \tilde \ua$. This is the consistency problem we solve in the remaining part of the proof: in order to do so, we obtain by completion a suitable canonical rewriting system that does not introduce inconsistencies. %aleSEPT2021: added previous sentences

We orient the functional equalities in~\eqref{eq:R} from left to right and the equalities $w_i= \tilde b_i$
also from left to right.  The $\ue$ are ordered as $e_1> \cdots > e_N$ and are bigger than the constants naming the elements of $\vert\cM\vert$; function symbols are bigger than constant symbols. 
We show that the ground Knuth-Bendix completion of~\eqref{eq:R} 
 cannot produce any inconsistent literal of the kind $t\neq t$ (this completes the proof of the Theorem).
 
 First notice that the rules $\{  w_i= \tilde b_i\mid i=1, \dots, s\} \cup \{  z_i= \tilde a_i\mid i=1, \dots, M\}$ simply eliminates the $\uw$ and the $\uz$  from $S_{\uu}$ (they become inactive after such normalization steps). Let $\tilde S_{\uu}$ be the set of equalities resulting after this elimination. It turns out that $\tilde S_{\uu}$ can only contain equalities of the kinds
 \begin{equation}\label{eq:11}
 f(a_1, \dots, a_h)= a
\end{equation}
\begin{equation}\label{eq:21}
 u_j\neq a
\end{equation}
where $a_1, \dots a_h, a$ can be either among the $\uu$ or constants naming elements of $\vert \cM\vert$. 
However some of the $\uu$ must be among $a_1, \dots, a_h$ for each equality of the kind~\eqref{eq:11}  because atoms not containing the $\uu$ are removed by $\Delta_\Sigma(\cM)$ and atoms like $u_i=t$ (where $t$ does not contain any of the $\uu$) cannot be there because $\delta$ is maximal. 
During completion, 
 in addition to these kinds of atoms, only atoms of the kind 
 \begin{equation}\label{eq:31}
  u_i= u_j 
 \end{equation}
can possibly be produced. This is a consequence of the next Lemma.
Below we say that a tuple of atoms $\Gamma$ is \emph{realized in $\cM$} iff the $\Gamma$ are $L_{\uw}$-atoms and $\cM\models \bigwedge \Gamma\sigma_{\delta}$; similarly we say that a literal $\Theta$ is \emph{conditionally realized in $\cM$} if there exists  $\Gamma$ realized in $\cM$ with $\Gamma\to \Theta\in S_3$ (if $\Theta$ is a negative literal, $\Gamma\to \Theta$ stands for $\Gamma,\neg \Theta\to\ $).

\begin{lemma}\label{lem:fact}
 Suppose that a literal $\Lambda$\ is produced during the completion of~$\tilde S_{\uu}$. Then 
 it must be of the kinds~\eqref{eq:11},~\eqref{eq:21},~\eqref{eq:31}. Moreover there exists a literal $\Lambda'$ such that (i) $\Lambda'$ is conditionally realized in $\cM$;  (ii) $\Lambda$ is obtained from $\Lambda'$ by rewriting $\uz, \uw$ respectively to $\tilde \ua, \tilde \ub$. 
 %, and by rewriting some of the $v_j\in \uu$ to some other $v_{j'}\in \uu$ in such a way that $j>j'$ and in such a way that  $v_j = v_{j'}$ is conditionally true in $\cM$. 
\end{lemma}

\begin{proof}%\todo{Maybe I shall expand a bit this proof}
By straightforward case analysis; we analyze the most interesting case given by the superposition of two rules of the kind~\eqref{eq:11}. Suppose that
\begin{equation*}
 f(a_1, \dots, a_h)= a~~{\rm and}~~f(a_1, \dots, a_h)= b
\end{equation*}
produce the equality  $a=b$. Then by induction hypothesis, there are in $S_3$ two clauses like
$$
\Gamma'\to f(a'_1, \dots, a'_h)= a', ~~~\Gamma''\to f(a''_1, \dots, a''_h)= a''
$$
with $\Gamma', \Gamma''$ realized in $\cM$, with $a'_1, \dots, a'_h,a'$ rewritable (using 
$\uz, \uw \mapsto \tilde \ua, \tilde \ub$) to $a_1, \dots, a_h,a$, respectively, and with $a''_1, \dots, a''_h,a''$ also rewritable (using 
$\uz, \uw \mapsto \tilde \ua, \tilde \ub$) to $a_1, \dots, a_h,b$, respectively. By Lemma~\ref{lem:pers}, $S_3$ contains a clause subsuming 
\begin{equation}\label{eq:t}
\Gamma', \Gamma'', a_1'=a''_1, \dots, a'_h= a''_h \to a'=a'' 
\end{equation}
which is as required because $\Gamma', \Gamma'', a_1'=a''_1, \dots, a'_h= a''_h$ is realized in $\cM$ and $a'=a''$ rewrites (using 
$\uz, \uw \mapsto \tilde \ua, \tilde \ub$) to $a=b$. It remains to check that $a=b$ is of the kind~\eqref{eq:31}. If both $a',a''$ taken from the consequent of~\eqref{eq:t} belong to $\uz\cup \uw$, then since the antecedent of~\eqref{eq:t} is realized in $\cM$
%silvio 16/2 added next line 
and~\eqref{eq:t} belongs to $S_3$,
 $a$ and $b$ must be the same element from $\vert \cM\vert$, so that $a=b$ is a trivial identity (which does not enter into the completion). It cannot be that only one between
$a'$ and $a''$ belongs to $\uz\cup \uw$ (the other one being from $\uu$) because $\delta$ is maximal among conditional DAGs realized by $\cM$
and thus it  cannot be properly enlarged by adding to it the additional conditional definition which would be supplied by~\eqref{eq:t}. Thus it must be the case that both $a', a''$ are from $\uu$, which implies that they cannot be rewritten  (using 
$\uz, \uw \mapsto \tilde \ua, \tilde \ub$), so that $a'$ is $a$, $a''$ is $b$ and $a=b$ is of the 
kind~\eqref{eq:31}. 
\end{proof}
\emph{Proof of Theorem~\ref{thm:main} (continued)}. Once $\tilde S_{\uu}$ (standing alone) is completed, only literals of the kinds~\eqref{eq:11},~\eqref{eq:21},~\eqref{eq:31} are produced.
No completion inference is possible between literals  of the kinds~\eqref{eq:11},~\eqref{eq:21},~\eqref{eq:31}
on one side and literals 
%With such literals, no rule 
from  $\Delta_\Sigma(\cM)\cup  \{  w_i= \tilde b_i\mid i=1, \dots, s\} \cup \{  z_i= \tilde a_i\mid i=1, \dots, M\}$ on the other side; hence the completion of  $\tilde S_{\uu}$ alone, once joined to $\Delta_\Sigma(\cM)\cup  \{  w_i= \tilde b_i\mid i=1, \dots, s\} \cup \{  z_i= \tilde a_i\mid i=1, \dots, M\}$ yields a completion of~\eqref{eq:R}. The only possible inconsistencies that can arise are given by  literals of the kind $u_i\neq u_i$.
Suppose that indeed one such a literal $u_i\neq u_i$ is produced during the completion of $\tilde S_{\uu}$. Applying the above lemma,
there should be 
%literals $u_{j_1}\neq u_{j_2}$, $u_{j_1}=u_i$,
%$u_{j_2}=u_i$, all conditionally true in $\cM$. This would imply, by the rewriting rule of Step 3, that 
%$u_i\neq u_i$ would also be conditionally true in $\cM$. But
%this would mean that $S_3$ should contain 
in $S_3$
a clause like $\Gamma, u_i=u_i\to \ $  (i.e.  after simplification, a clause like $\Gamma \to \ $) with $\Gamma$ being realized in $\cM$. The last %AGapril: modified 'the latter', I wrote 'the last' instead
means that $\cM\models \bigwedge\Gamma\sigma_{\delta}$. This cannot be, because $\Gamma \to \ $ is a $L_{\uw}$-clause from $S_3$: in fact, we have that $\cM\models \phi_{\delta}$ and that $\delta$ is realized in $\cM$, which imply that $\cM\models C\sigma_{\delta}$ holds for every $L_{\uw}$-clause $C$ 
%aleSEP2021: added $C$ as suggested by reviewer 1. Double check
from $S_3$ by the definition of $\phi_{\delta}$. In particular, we should have  $\cM\models \neg\bigwedge\Gamma\sigma_{\delta}$, taking $\Gamma \to \ $ as $C$.
\end{proof}

\section{Conclusions}\label{sec:concl}
%%DK: changes and edits here.
Two different algorithms for computing uniform interpolants (UIs) from a formula
in \EUF with a list of symbols to be eliminated are presented. They share a common subpart as well as they
are different in their overall objectives. The first algorithm 
%silvio 4-10-21 removed line (after a refere's question)
%is non-deterministic and 
generates a UI expressed as a disjunction of
conjunctions of literals, whereas the second algorithm gives a
compact representation of a UI as a conjunction of Horn clauses.
The output of both algorithms
%the second algorithm 
needs
to be expanded if a fully (or partially) unravelled %alefinal: unravelled instead of expanded here?
 uniform interpolant is needed for an
application. This restriction/feature is similar in spirit to
syntactic unification where also efficient
unification algorithms never produce output in fully
expanded form to avoid an exponential blow-up. 

For generating a compact representation of the UI, both
algorithms make use of DAG representations of terms by introducing
new symbols to stand for subterms arising in the full expansion of the
UI. Moreover, the second algorithm uses a conditional DAG, a
new data structure introduced in the paper, to represent subterms under conditions.

The complexity of the algorithms is also analyzed. It is shown
that the first algorithm generates exponentially many
branches with each branch of at most quadratic length; 
%silviofinal slight rewording, taking into consideration the referee's observation
%the complexity of the second algorithm is 
the UIs produced by the second algorithm have
%silvio 4-10-21 
%reformulated  after referee's question
%often polynomial size in concrete examples
polynomial size in all the hand-made examples we tried
(but the worst case size is still exponential
as witnessed by \emph{ad hoc} examples like Example~\ref{ex:exp}). A fully expanded UI  can easily be
of exponential size.
%, thus needing exponentially many steps.
%
An implementation of both the algorithms, along with a
comparative study 
%silvio 16/2 I removed next line because I do not think that UI algorithms are yet implemented in SMT (unless Bjorner has implemented his algorithm in z3, but it is not
%the case to mention it - it is not official) So comparison can only be among our two algorithms at the moment we write
%
%with the algorithms supported in well-known SMT solvers 
are planned as future work. In parallel with the
implementation, a characterization of classes of formulae for
which computation of UIs requires polynomial time in our
algorithms (especially in the second one) needs further investigation.

\smallskip
\noindent
\textbf{Acknowledgments.} The third author has been partially supported by the National Science Foundation award CCF -1908804.

%\todo{To be written.}

%\input{conclusions}

%\hfill\eject
\bibliographystyle{alphaurl}
\bibliography{mcmt}

\end{document}